\theoremstyle{plain}
\newtheorem*{thm*}{Theorem}
\newtheorem{thm}{Theorem}[section]
\newtheorem{lemma}[thm]{Lemma}
\newtheorem*{lemma*}{Lemma}
\newtheorem{corollary}[thm]{Corollary}
\newtheorem*{corollary*}{Corollary}
\newtheorem*{prop*}{Proposition}
\newtheorem{defn}{Definition}
\newtheorem*{conjecture*}{Conjecture}
\newcommand{\R}{\mathbb{R}}
\newcommand{\E}{\mathbb{E}}
\newcommand{\N}{\mathbb{N}}
\newcommand{\1}{\mathbbm{1}}
\newcommand{\emax}{e_{\max}}
\DeclarePairedDelimiter\floor{\lfloor}{\rfloor}
\author{

        Desmond John Higham%
            \thanks{%
           School of Mathematics,
           University of Edinburgh,
           Edinburgh, EH9 3FD, UK
           (\texttt{d.j.higham@ed.ac.uk}).
         Both authors were supported by Engineering and Physical
     Sciences Research Council grant EP/P020720/1. 
        }
        
        \and
        
        Henry-Louis de Kergorlay%
    \thanks{%
           School of Mathematics,
           University of Edinburgh,
           Edinburgh, EH9 3FD, UK
           (\texttt{hdekerg@ed.ac.uk})
           }
\and 
        }
\date{}
\title{Epidemics on Hypergraphs: Spectral Thresholds for Extinction} 
\begin{document}

\maketitle
\begin{abstract}
    Epidemic spreading is well understood when a disease propagates around a contact graph. 
    In a stochastic susceptible-infected-susceptible setting, 
    spectral conditions characterise whether
    the disease vanishes. However, modelling human interactions using a graph is a simplification which only considers pairwise relationships.
    This does not fully represent the more realistic case where people meet in groups. Hyperedges can be used to record such group interactions, yielding more faithful and flexible models, allowing for the rate of infection of a node to vary as a nonlinear function of the number of infectious neighbors. We discuss different types of contagion models in this hypergraph setting, and 
    derive spectral conditions that characterize whether the disease vanishes. 
    We study both the exact individual-level stochastic model and a deterministic mean field ODE approximation.
    Numerical simulations are provided to illustrate the 
    analysis. We also interpret our results and show how the hypergraph 
    model allows us to distinguish between contributions to infectiousness that (a) are inherent in the nature of the pathogen and (b) arise from behavioural choices (such as social distancing, increased hygiene and use of masks). 
    This raises the possibility of more accurately quantifying the effect of interventions that are designed to contain the spread of a virus.   
\end{abstract}
\section{Introduction}

Compartmental models for disease propagation have a long and illustrious history
\cite{AM92,KM27}, and they 
remain a fundamental
predictive tool
\cite{EE2020,GBBCFPC20}.
For a stochastic, individual-level model, it has been suggested recently that 
hyperedge information
should be incorporated  \cite{APM20,simplicialSocialContagion,heterogeneityHypergraph}.
Hyperedges allow us to 
account directly for 
group interactions of any size,
rather than, as in, for example,  
\cite{epidemicsSpread,virusSpreadInNetworks,spreadingEigenvalue}, 
treating them as 
a collection of essentially independent 
pairwise encounters.

In this work we
contribute to the 
modeling and analysis 
of  disease spreading on a hypergraph.
We include the case where the number of infected nodes in a hyperedge contributes nonlinearly to the overall infection rate; this covers the 
so-called \emph{collective contagion model}
setting and a new alternative 
that we call a 
 \emph{collective suppression model}.
The main contributions of our work are 
\begin{itemize}
    \item a mean field approximation (\ref{dynamical system})--(\ref{eq:gidef}) with a spectral condition for local asymptotic stability of the zero-infection state 
    (Theorem~\ref{thm: local asymptotic stability}) 
    and 
     an extension to global asymptotic stability
     (Theorem~\ref{thm: global asymptotic condition})
     when the nonlinear infection function is concave,
    \item for the exact, individual-level model, a spectral condition for exponential decay
    of the non-extinction probability
    in the concave case (Theorem~\ref{thm: exponential decay}) and a spectral bound on the expected time to extinction (Corollary~\ref{cor:et}),
    \item extensions of these results to more general
    partitioned hypergraph models, where distinct infection rates apply to different categories of hyperedge (\ref{eq:partition}),
    \item results for the non-concave collective 
    contagion model (Theorem~\ref{thm: collective contagion model} and 
    Theorem~\ref{thm: max collective contagion model}),
    \item a complementary condition that rules out extinction of the disease (Theorem~\ref{thm: non-zero steady-state}), 
    \item interpretations of these mathematical results: the spectral
    thresholds for disease extinction naturally distinguish between 
    the 
    inherent biological infectiousness of the disease and behavioural choices of the individuals in the population, allowing us to account for intervention strategies 
    (Section~\ref{sec:disc}).
\end{itemize}

The manuscript is organized as follows.
In section~\ref{sec:SIS}
we introduce the traditional 
graph-based susceptible-infected-susceptible 
(SIS) model and quote a spectral condition 
that
characterizes
control of the disease.
We then 
discuss the generalization to hyperedges, and
motivate the 
 use of infection rates that 
 do not scale linearly with 
 respect to the number of infected neighbors.
 Section~\ref{sec:hyp_model}
 formalizes the hypergraph 
 model and 
 shows how it may be simulated.
 In section~\ref{sec:mean_field}
 we derive a mean field approximation
 to characterize the behaviour of the model, and 
 in section~\ref{sec:comp} we 
 give computational results
 to illustrate its relevance.
 Section~\ref{sec:stab}
 analyses the deterministic mean field setting and 
 gives a spectral condition for
 long-term decay of the disease.
 The result is local for general infection rates and global (independent of the initial condition)
 for the concave case.
 This spectral condition generalizes a well-known result concerning disease 
 propagation on a graph.
 Section~\ref{sec:more_sim} provides further computational simulations to illustrate the spectral threshold. 
 The full stochastic model is then studied in 
 section~\ref{sec:exact_model},
 where we extend 
 the analysis in \cite{epidemicsSpread} to our hypergraph setting.
 Here we study extinction of the disease
 in the case where 
 the nonlinearity in the infection rate is concave.
 We also derive conditions for non-extinction.  
 In section~\ref{sec:cim}
 we extend 
 our analysis to the 
 so-called
 collective contagion model
 proposed in 
 \cite{APM20,simplicialSocialContagion,heterogeneityHypergraph}.
 In section~\ref{sec:disc}
 we summarize and interpret our results, and 
 discuss follow-on work.

 For a review of recent studies of spreading processes on hypergraphs, including 
 the dissemination of rumours, opinions and knowledge, 
 we recommend \cite[subsection~7.1.2]{BCILLPYP21}.
 The model that we study fits into the framework of 
 \cite{SISonHypergraphs}. This work introduced the idea 
of a nonlinear ``infection pressure''  from each hyperedge, and derived a mean field approximation that was compared with microscale-level simulation results.
In \cite{simplicialSocialContagion}, the authors studied
this type of model on simplicial complexes of degree up to two (a subclass of the more general hypergraph setting) and also studied a mean field approximation.
These authors examined the mean field system from a dynamical systems perspective and analysed issues such
as bistability, hysteresis and discontinuous transitions.
Similarly, in \cite{APM20,heterogeneityHypergraph}, a hypergraph version was considered. Our work differs from these studies in 
(a) focusing on the derivation of spectral thresholds 
for extinction of a disease in both the exact and mean field settings and (b) 
seeking to interpret the results from a 
mathematical modelling perspective.
We mention that it would also be of interest to 
develop corresponding thresholds for the  
mean field models in
\cite{SISonHypergraphs,APM20,simplicialSocialContagion}.

\section{Stochastic SIS models}
\label{sec:SIS}

\subsection{Stochastic SIS model on a graph}
\label{subsec:graph}
Classical ODE compartmental models are based on the assumption that 
any pair of individuals is equally likely to interact---this 
is the homogeneous mixing case \cite{AM92}.
If, instead, we have knowledge of all 
possible pairwise interactions between individuals,
then this information may be 
incorporated via a contact graph and used in a stochastic model.
Here, each node represents an individual, and 
an edge between nodes $i$ and $j$ indicates that 
individuals $i$ and $j$ interact.
For a population with $n$ individuals, 
we may let 
$A \in \mathbb{R}^{n \times n}$ denote the 
corresponding symmetric adjacency matrix, so nodes $i$ and $j$ interact if and only
$A_{ij} = 1$.
In this setting, a stochastic SIS model 
uses the two-state random variable
$X_i(t)$
to represent the §
status of node $i$ at time $t$, with $X_i(t) = 0$
for a susceptible node and 
$X_i(t) = 1$
for an infected node. Each 
$X_i(t)$
then 
follows
a continuous time Markov process where 
the infection rate is given by 
\begin{equation}
    \beta \sum_{j=1}^n A_{ij} X_j(t)
    \label{eq:graph_inf}
\end{equation}
and the recovery rate is $\delta$. Here,
$\beta > 0$ and $\delta > 0$ are parameters governing the strength of the two effects. 
In this model, we see from (\ref{eq:graph_inf}) that the 
current 
chance of infection increases linearly in proportion to the
current number of infected neighbors.

This model was studied in 
\cite[Theorem~1]{spreadingEigenvalue}, where it was argued that the  
condition
\begin{equation}
\lambda(A)    \frac{\beta}{\delta} <1
\label{eq:graph_epi}
\end{equation}
guarantees the disease will die out.
Here, $\lambda(A)$ denotes the largest eigenvalue of the symmetric matrix $A$.
Further justification for this result may be found, for example,
in 
\cite{epidemicsSpread,virusSpreadInNetworks}. 
We note that (\ref{eq:graph_epi}) gives an elegant generalization of the homogeneous mixing case (where $A$ corresponds to the complete graph).

\subsection{Why use a hypergraph?}
\label{subsec:hypergraph}
It has been argued 
\cite{ABAMPL21,BCILLPYP21,BASJK18,benson2016higher,ER06}
that 
in many network science applications 
we lose information by 
recording only pairwise interactions. 
For example, emails can be sent to groups of recipients,
scholarly articles may have multiple coauthors,
and
many proteins may interact to form a complex.  
In such cases, recording the relevant lists of interacting 
nodes gives a more informative picture than reducing 
these down to a collection of edges.

In the setting of an SIS model, we may argue that
individuals typically come together in
well-defined groups, for example, in a household, 
a workplace or a social setting.
Such groups may be handled by the use of hyperedges,
leading to a hypergraph; these 
concepts are formalized in the next section.

With a classic graph model, as described in section~\ref{subsec:graph},
the rate of infection of a node is linearly proportional to the number of infectious neighbors.
With a hypergraph we may consider more intricate contagion mechanisms.
For example, 
using the terminology of 
\cite{heterogeneityHypergraph}, 
the \textit{collective contagion} model 
is used in \cite{APM20,simplicialSocialContagion,heterogeneityHypergraph}). Here, infection only starts spreading within a hyperedge after a certain threshold number of infectious neighbors has been reached. 
This type of behaviour is relevant, for example, in an office environment. A small number of workers may be able to 
socially distance in way that effectively eliminates the 
risk of infection.
However,
if the number of 
individuals (size of the hyperedge) is too large, then 
the disease may spread.

We mention here that an alternative type of mechanism may also  
operate, which we call \textit{collective suppression}.
Imagine that a disease may be contracted through 
contact with a surface that was previously touched by an infected individual.
Now suppose that 
a group of individuals is likely to use the 
same physical object, such as a door handle, hand rail,
cash machine, or water cooler.
If an infected individual contaminates the object, then further 
contamination by other individuals is less relevant.
In this case,
doubling the number of common users will increase the 
risk of infection by a factor less than two; generally 
risk grows sublinearly as a function of the size of the hyperedge. 

These arguments motivate us to study the case where 
the rate of infection of a node within a hyperdege is dependent on a generic function $f$ of the number of infectious neighbors in a hyperedge; this approach was also taken in  \cite{SISonHypergraphs}.
We will be particularly concerned with the 
case where $f$ is concave, since this 
is tractable for analysis and allows us to 
draw conclusions about the 
collective contagion model.

We note that if $f$ is the identity, then we recover linear dependence on the number of infectious neighbors and the hypergraph model is equivalent to a virus spreading on the clique graph of the hypergraph.

\section{SIS on a Hypergraph}

\label{sec:hyp_model}
\subsection{Background}
We continue with some standard definitions
\cite{bretto2013hypergraph}. 

\begin{defn}
A \emph{hypergraph} is a tuple $\mathcal H:=(V,E)$ of \emph{nodes} $V$ and \emph{hyperedges} $E$ such that $E\subset {\mathcal P}(V)$.
Here, ${\mathcal P}(V)$ denotes the power set of $V$. 
 \end{defn}
 
 We will let $n$  and $m$ denote the number of nodes
 and hyperedges, respectively; that is, 
 $|V| =n$ and $|E| = m$. Loosely, a hypergraph generalizes 
the concept of a graph by allowing an ``edge'' to be a list of more than two nodes. 

\begin{defn}
Consider a hypergraph $\mathcal H:=(V,E)$.
The \emph{incidence matrix},
${\mathcal I}$, 
is the $n\times m$ matrix such that  
$\mathcal I_{i h}=1$ if node $i$ belongs to hyperedge $h$ and $\mathcal I_{i h}=0$ otherwise.
\end{defn}

It is also useful to introduce 
$W:={\mathcal I} {\mathcal I}^T$.
This $n\times n$ matrix has the property that 
$W_{i j}$ records the number of hyperedges containing both nodes $i$ and $j$.
In particular, if $\mathcal H$ is a graph then $W$ is the affinity matrix of the graph.

\subsection{General infection model}
\label{subsec:model}

In our context, the nodes represent individuals and 
a
hyperedge records a collection of individuals who 
are known to interact as a group.
As in the graph case introduced in subsection~\ref{subsec:graph},
we use a state vector $X(t)$
which follows a continuous time Markov process, 
where, for each $1 \le i \le n$,
$X_i(t) = 1$ if node $i$ is infected at time 
$t$ and
$X_i(t) = 0$ otherwise.
We continue to assume that 
an infectious node becomes susceptible with 
constant recovery rate $\delta>0$.
However, 
generalizing 
(\ref{eq:graph_inf}), we now assume 
that a susceptible node $i$ becomes infectious with
rate 
\begin{equation}\label{infection rate random}
\beta\sum_{h\in E}\mathcal I_{i h}f\big(\sum_{j = 1}^{n} X_j(t)\mathcal I_{j h}\big),
\end{equation}
where $\beta>0$ is a constant. 
In (\ref{infection rate random}), 
$f:\R_+\to \R_+$
specifies the 
manner in which
the contribution to the overall level of 
infectiousness 
from each hyperedge involving node $i$ 
is assumed to increase 
in proportion to the number of 
infected nodes in that hyperedge.
Throughout our analysis we will always assume that 
$f(0)=0$ and 
$f$ is $C^1$ in a neighborhood of $0$.

 If $f$ is the identity, the rate of infection reduces to $\beta \sum_{j=1}^{n} W_{i j}X_j(t)$. This gives a weighted version of the infection rate of an SIS model on a graph.
 As discussed in subsection~\ref{subsec:hypergraph},
 it may be appropriate to choose 
 nonlinear $f$ in certain circumstances.
  We note that in \cite{SISonHypergraphs} the authors have in mind functions which behave like the identity near the origin and have a horizontal asymptote. Instances of such functions are 
  $x\mapsto \mathrm{arctan}(x)$ and $x\mapsto \min\{x,c\}$ for some $c>0$. Relaxing these conditions, we may ask more generally in such a
  setting that the function be concave. On the other hand, the authors in \cite{APM20,simplicialSocialContagion} consider a \emph{collective contagion model}, where infection spreads within a hyperdge only if a certain threshold of infectious vertices is reached in that hyperedge.
   A collective contagion model may be represented via the function $x\mapsto c_2\1(x\geq c_1)$ for some $c_1,c_2>0$, or $x\mapsto \max\{0,x-c\}$ for some $c>0$.
 
 \subsection{Partitioned hypergraph model}
 \label{subsec:partitoned}
 We also introduce a more general case where we partition the hyperedges into $K$ disjoint categories with each category $1 \le k \le K$ having its own distinct rate of infection in response to the number infected nodes in a hyperedge, represented by a function $f_k$.
 For example, the categories may correspond to different types of housing, workplaces, hospitality venues or sports facilities. 
 We may then represent the infection rate of node $i$ as
\begin{equation}\label{eq:partition}
\beta \sum_{k=1}^{K}\sum_{h\in E}{\mathcal I}^{(k)}_{i h}f_k\left(\sum_{j=1}^{n}X_j(t){\mathcal I}^{(k)}_{j h}\right),
\end{equation}
where we let ${\mathcal I}^{(k)}_{i h}=1$ if $i$ belongs to hyperedge $h$ in the category $k$ and ${\mathcal I}^{(k)}_{i h}=0$ otherwise; so  
${\mathcal I}^{(k)}$ is the incidence matrix of the subhypergraph consisting of only the hyperedges from category $k$.
We will refer to this as a 
\emph{partitioned hypergraph model}.

In this generalized case, a collective contagion model could be defined by first organizing the hyperedges into categories depending on their size, so that category $k$ is the set of hyperedges of size $k+1$. A collective contagion model may then represented, for example, via the functions $f_1:x\mapsto x$, and $f_k:x\mapsto c_{2,k}\1(x\geq c_{1,k})$, $k\in \{2,\dots,K\}$.

\section{Mean Field Approximation}
\label{sec:mean_field}


A classic approach to studying processes such as 
(\ref{infection rate random}), where 
infection rates are random, 
is to develop 
a mean field approximation for the expected process $$\big(\E[X_i(t)]\big)_{t\geq 0}=\big(\mathbb P(X_i(t)=1)\big)_{t\geq 0}=:(p_i(t))_{t\geq 0},$$
with 
deterministic rates.
In our case, 
the rate of recovery $\delta$ is constant, so can remain unchanged. Let us express the rate of infection (\ref{infection rate random}) of a node solely in terms of the expected processes 
$\{p_i(t)\}_{i=1}^{n}$. To do this we can substitute the $X_j(t)$ appearing in (\ref{infection rate random}) by their expected values $p_j(t)$. The approximate rate of infection for node $i$ then becomes
\begin{equation}\label{infection rate approximate}
\beta \sum_{h\in E}\mathcal I_{i h}f\big(\sum_{j=1}^{n} p_j(t)\mathcal I_{j h}\big).
\end{equation}
We arrive at the deterministic mean field ODE
\begin{equation}\label{dynamical system}
\frac{d P(t)}{dt}=g(P(t)),
\end{equation}
where $g_i:\R^n\to \R$ is defined by
\begin{equation}
g_i(P(t)):=\beta\sum_{h\in E}\mathcal I_{i h}f\left(\sum_{j=1}^{n}p_j(t)\mathcal I_{j h}\right)(1-p_i(t))-\delta p_i(t).
\label{eq:gidef}
\end{equation}

\section{Simulations and Comparison between Exact and Mean Field Models}
\label{sec:comp}

Let us emphasize that the approximate infection rates in (\ref{infection rate approximate}) differ 
in general from the expectation of the random rates in (\ref{infection rate random}). When the function $f$ is concave, however, Jensen's reverse inequality indicates that the rates in (\ref{infection rate approximate}) are greater than the expectation of the rates in (\ref{infection rate random}). Hence, in this case the expected quantities $p_i(t)$ are overestimated by 
(\ref{dynamical system})--(\ref{eq:gidef}). 
This is fine since we are looking for conditions for the disease to vanish. If $f$ is not concave (e.g., for a collective contagion model), these expected quantities are underestimated and it is not clear a priori whether the exact model is well approximated by the mean field ODE. 

In this section we therefore present 
results of computational simulations in order to gain insight into the accuracy of our mean field
approximation.

\subsection{Simulation algorithm}

Before presenting numerical results, we 
summarize our approach for simulating the 
individual-level stochastic model, which is based 
on a standard 
time discretization;
see, for example, \cite{SISonHypergraphs}. 
Using a small fixed time step $\Delta t$, we advance from time $t$ to $t+\Delta t$ as follows.
First, let $r\in [0,1]^n$ be a random vector 
of i.i.d.\ values uniformly sampled from $[0,1]$. For every node $1 \le i \le n$,
\begin{itemize}
    \item when $X_i(t)=0$, set $X_i(t+\Delta t)=1$ if
$$
r_i<1-\exp\left(-\beta \sum_h\mathcal I_{i h}f(\sum_j X_j(t)\mathcal I_{j h})\Delta t\right),
$$
and set $X_i(t+\Delta t)=0$ otherwise; 
\item when $X_i(t)=1$, set $X_i(t+\Delta t)=0$ if
$$
r_i<1-\exp\left(-\delta\Delta t\right),
$$
and set $X_i(t+\Delta t)=1$ otherwise.
\end{itemize}

\subsection{Computational results}
\label{subsec:comp}


In the simulations we chose $n=400$ nodes with fixed
recovery rate $\delta =1$.
We look at results for different choices of 
infection strength $\beta$ and $i_0$, the latter denoting the (independent) initial probability for each node to be infectious. 
We simulated the mean field ODE using Euler's method 
with time step $\Delta t=0.05$. The largest size of a hyperedge was $5$ and we distributed the number of hyperedges for the hypergraph randomly as follows: $300$ edges, $200$ hyperedges of size $3$, $100$ hyperedges of size $4$ and $50$ hyperedges of size $5$. To give a feel for the level of fluctuation, the individual-level paths are averaged over $10$ runs, each 
with the same hypergraph connectivity and initial state.

Figures~\ref{fig: min},
\ref{fig: log} and 
\ref{fig: atan}
show results for three concave choices of $f$; respectively,
\begin{itemize}
    \item $f(x)=\min(x,3)$,
    \item $f(x) = \log(1+x)$,
    \item $f(x) = \mathrm{arctan}(x)$.
\end{itemize}
For 
Figure~\ref{fig: max}
we used a collective contagion model on a partitioned
hypergraph. Assigning each hyperedge to a category in $\{1,2,3,4\}$, where category $k$ contains the hyperedges of size $k+1$, we chose the following associated functions to determine the infection rates: $f_1(x):=x$, and for $k\in \{2,3,4\}$, $f_k(x):=(k-1)\1(x\geq k-1)$.

The four figures show the proportion of infectious individuals as a function of time. 
In these simulations, and others not reported here,
we observe that the initial value $i_0$ does not affect the asymptotic behavior of the process: the process vanishes or converges to a non-zero equilibrium depending on the value of $\beta$ but regardless of the value of $i_0$.
In 
Figures~\ref{fig: min},
\ref{fig: log} and 
\ref{fig: atan}, where  $f$ is concave, we know that 
the mean field model gives an upper bound
on the expected proportion of infected individuals
in the microscale model. We also see
that the mean field model provides a reasonably sharp approximation. 
Moreover, we see a similar level of sharpness in  
Figure~\ref{fig: max} for the 
collective contagion model, where 
$f$ is not concave.


\begin{figure}[htp]
    \centering
    \includegraphics[width=\textwidth]{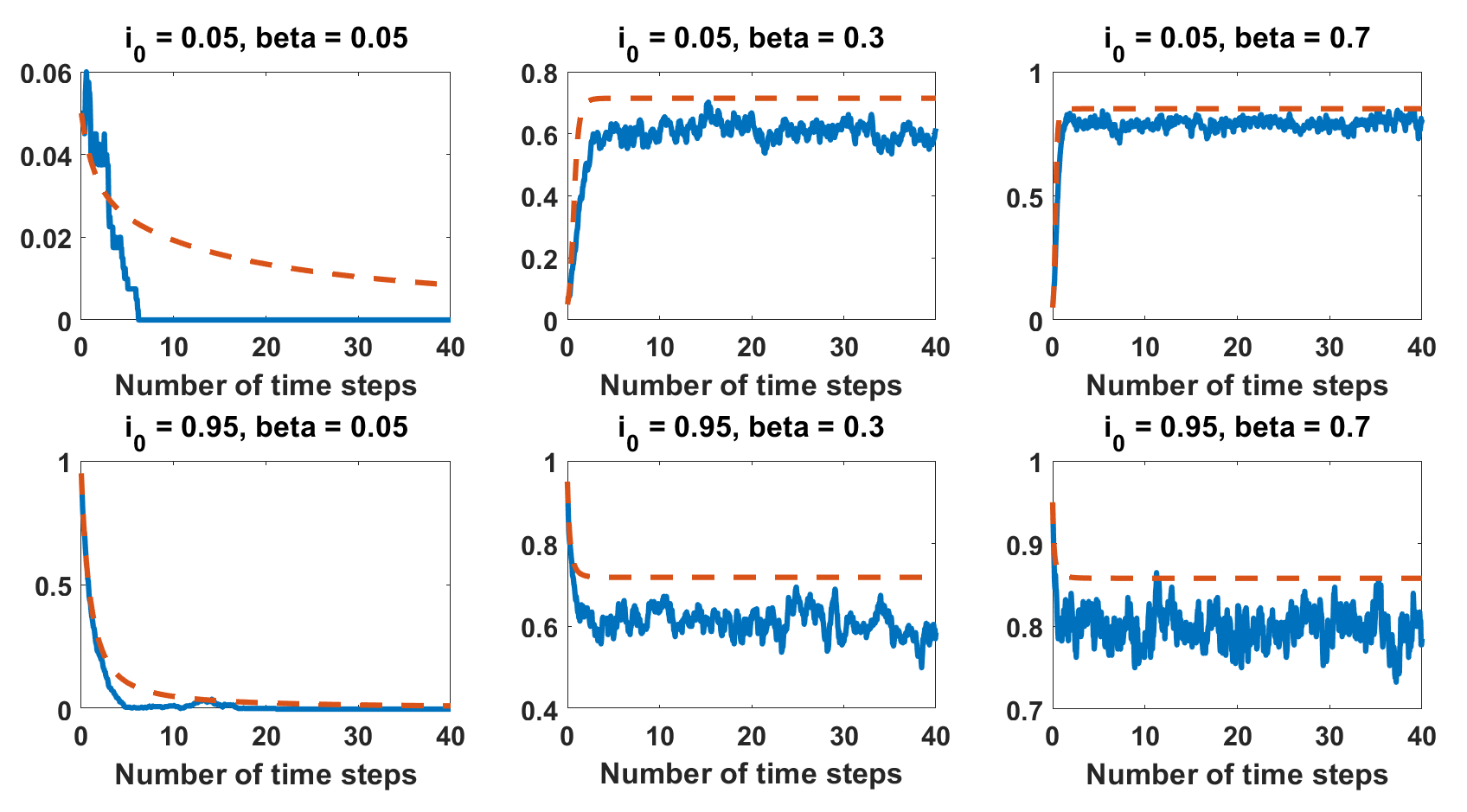}
    \caption{Here, $f(x) = \min(x,3)$.  
     Red dashed line: mean field approximation from 
   (\ref{dynamical system})--(\ref{eq:gidef}).
    Blue solid line:
     proportion of infected individuals,
     $\sum_{i} X_i(t)/n$, from the individual-level stochastic model 
     (\ref{infection rate random}), averaged over $10$ runs.
    }
    \label{fig: min}
\end{figure}

\begin{figure}[htp]
    \centering
    \includegraphics[width=\textwidth]{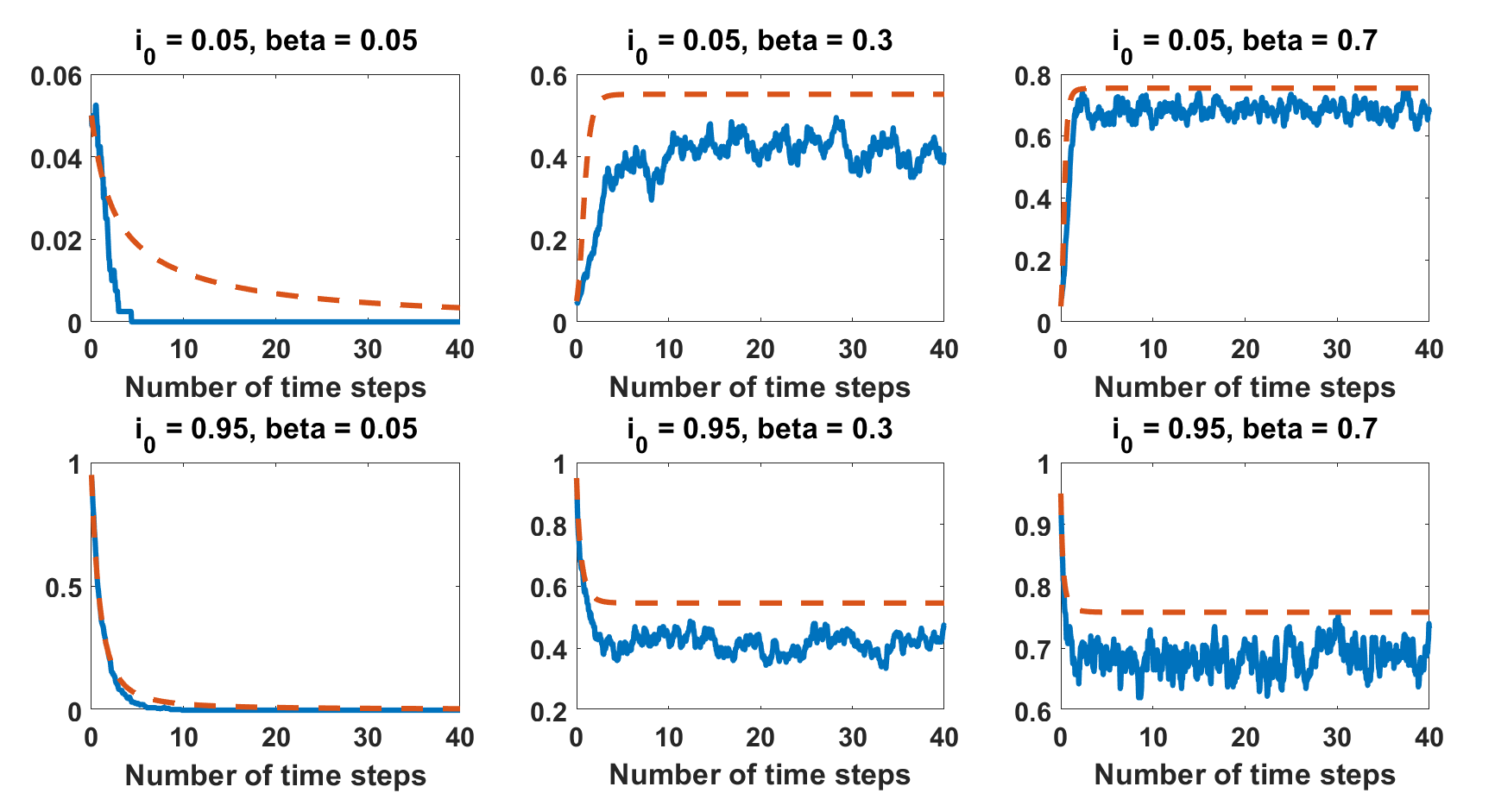}
    \caption{Here, $f(x) = \log(1+x)$.  
     Red dashed line: mean field approximation from 
   (\ref{dynamical system})--(\ref{eq:gidef}).
    Blue solid line:
     proportion of infected individuals,
     $\sum_{i} X_i(t)/n$, from the individual-level stochastic model 
     (\ref{infection rate random}), averaged over $10$ runs.
    }
    \label{fig: log}
\end{figure}

\begin{figure}[htp]\label{fig: atan}
    \centering
    \includegraphics[width=\textwidth]{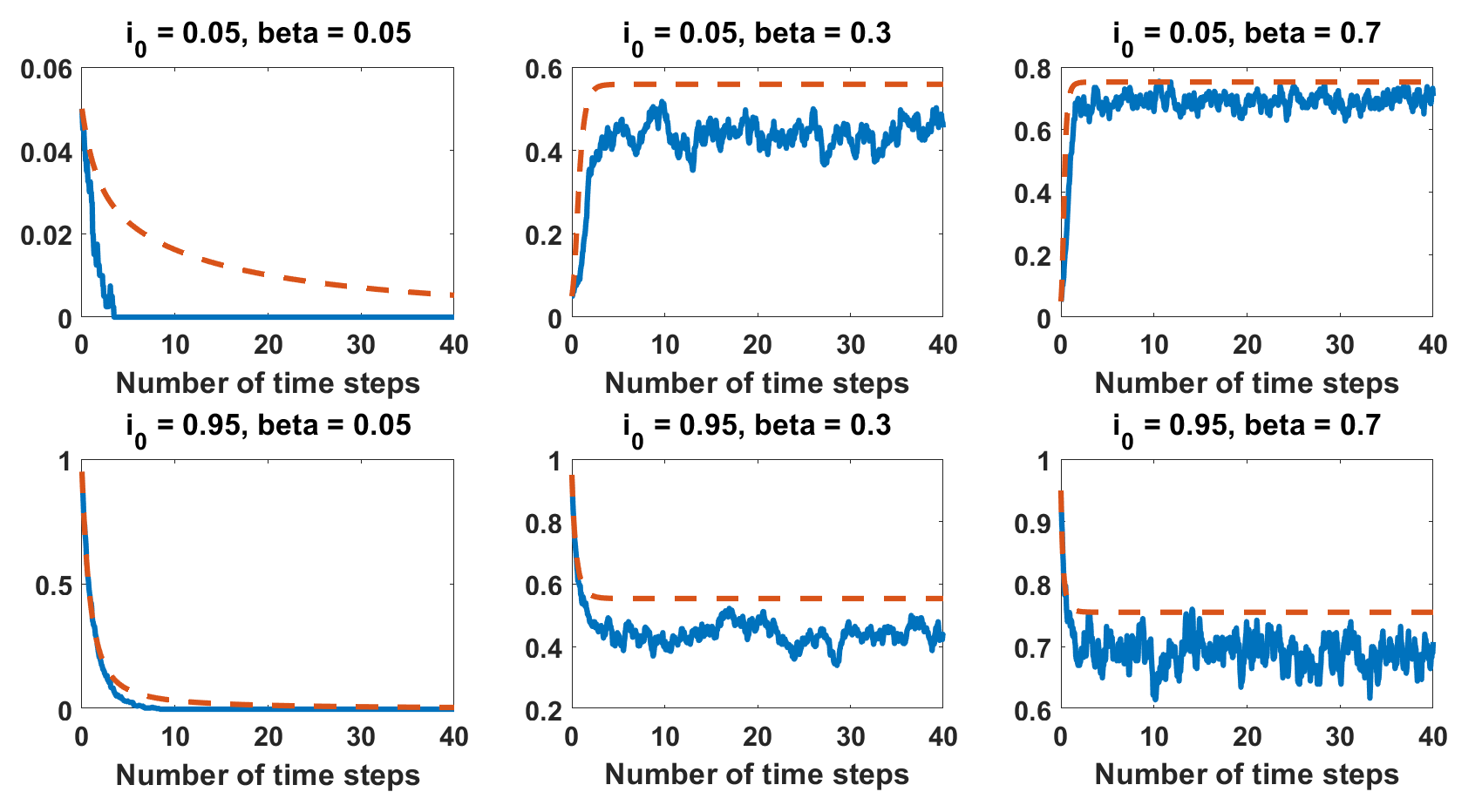}
    \caption{Here, $f(x) = \mathrm{arctan}(x)$.  
     Red dashed line: mean field approximation from 
   (\ref{dynamical system})--(\ref{eq:gidef}).
    Blue solid line:
     proportion of infected individuals,
     $\sum_{i} X_i(t)/n$, from the individual-level stochastic model 
     (\ref{infection rate random}), averaged over $10$ runs.
    }
    \label{fig: atan}
\end{figure}

\begin{figure}[htp]
    \centering
    \includegraphics[width=\textwidth]{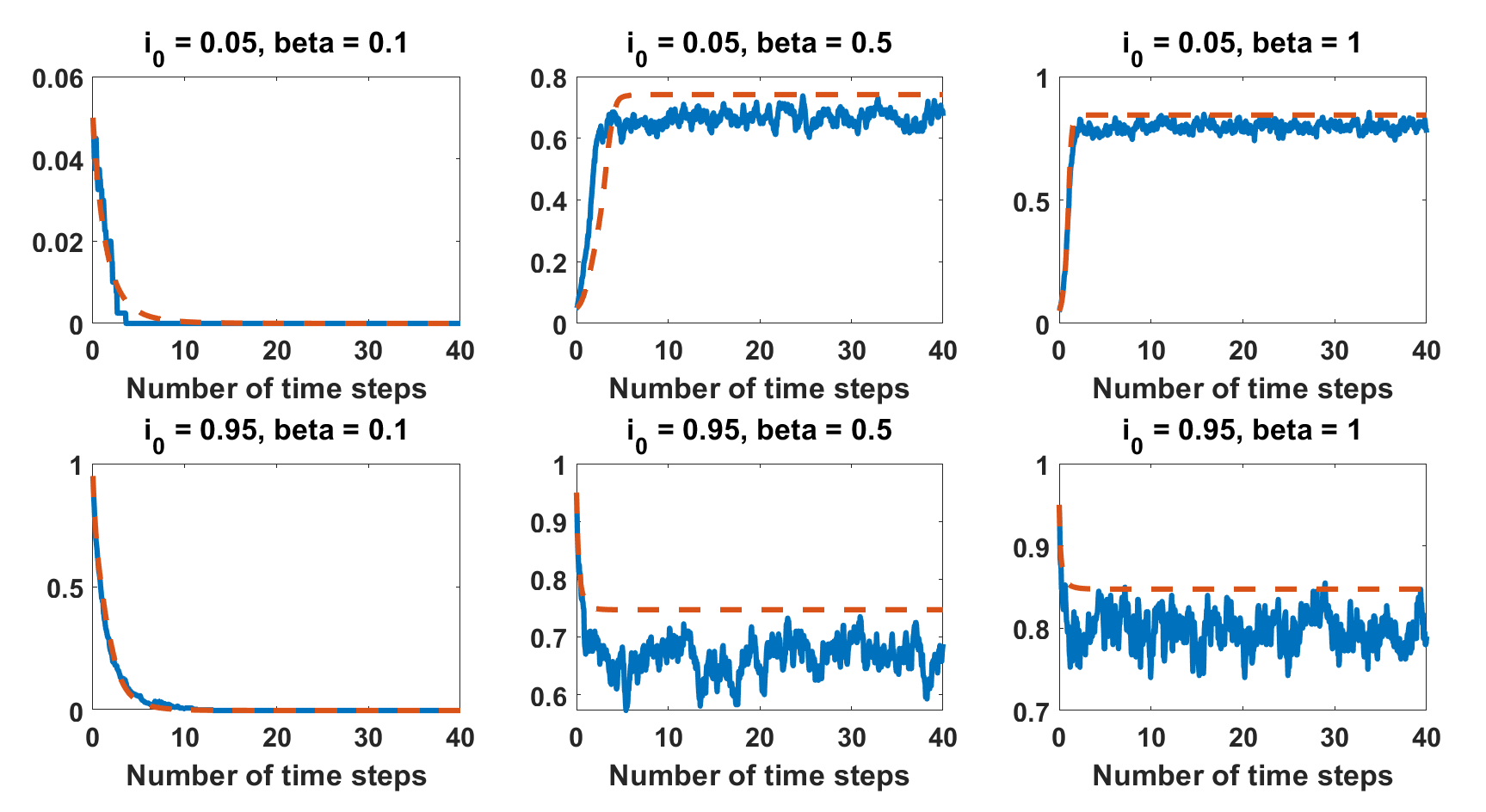}
    \caption{Collective contagion model on a partitioned hypergraph.
    Red dashed line: mean field approximation from 
   (\ref{dynamical system}) with (\ref{eq:mfpart}).
    Blue solid line:
     proportion of infected individuals,
     $\sum_{i} X_i(t)/n$, from the individual-level stochastic model 
     (\ref{infection rate random}), averaged over $10$ runs.
    }
    \label{fig: max}
\end{figure}


A key advantage of the mean field approximation is that it gives rise to a deterministic autonomous dynamical system for which there exists a rich theory to study the asymptotic stability of equilibrium points.
This motivates the analysis in the next section.

\section{Stability Analysis}
\label{sec:stab}

We provide below spectral conditions which imply that 
the infection-free solution $0 \in \R^n$ is a locally or globally asymptotically stable equilibrium of 
(\ref{dynamical system})--(\ref{eq:gidef}).
We will find that local asymptotic stability can be shown with no structural assumptions on $f$. 
We will also find that
global asymptotic stability
follows \emph{under the same conditions} when $f$ is concave. 
 Our conclusions fit into a framework that generalizes 
the graph case (\ref{eq:graph_epi}):
the spectral threshold takes the form 
\[
\lambda(W)\frac{c_f \beta}{\delta} <1 
\]
for some constant $c_f>0$ depending only on the choice of $f$.

Throughout this work, to be concrete we 
let $\| \cdot \|$ denote the Euclidean norm.

\subsection{Local asymptotic stability}
\begin{thm}\label{thm: local asymptotic stability}
If
\begin{equation}
\lambda(W)
\frac{ f'(0) \beta}{\delta} < 1
\label{eq:W1}
\end{equation}
then 
$0 \in \R^n$ is a locally asymptotically stable equilibrium of (\ref{dynamical system})--(\ref{eq:gidef}); that is, 
there exists a positive $\gamma$ such that 
 $||P(0)||<\gamma \Rightarrow\lim_{t\to\infty}||P(t)||=0$.
\end{thm}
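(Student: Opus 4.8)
The plan is to linearize the ODE system \eqref{dynamical system}--\eqref{eq:gidef} at the equilibrium $0\in\R^n$ and apply the standard principle of linearized stability (Lyapunov's indirect method): if the Jacobian $Dg(0)$ has all eigenvalues with strictly negative real part, then $0$ is locally asymptotically stable. So the first step is to compute $Dg(0)$. Differentiating $g_i$ with respect to $p_j$ and using $f(0)=0$, the only surviving terms at $P=0$ come from differentiating the nonlinear factor $f\big(\sum_\ell p_\ell \mathcal I_{\ell h}\big)$ (the $(1-p_i)$ prefactor contributes nothing extra because $f$ vanishes at $0$), giving $\frac{\partial g_i}{\partial p_j}(0)=\beta f'(0)\sum_{h\in E}\mathcal I_{ih}\mathcal I_{jh}-\delta\,\1(i=j)=\beta f'(0)W_{ij}-\delta\,\1(i=j)$. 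Hence $Dg(0)=\beta f'(0)\,W-\delta I$.

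Next I would read off the spectrum. Since $W=\mathcal I\mathcal I^T$ is symmetric (indeed positive semidefinite), so is $\beta f'(0)W-\delta I$, and its eigenvalues are exactly $\beta f'(0)\mu-\delta$ as $\mu$ ranges over the eigenvalues of $W$. All of these are real and strictly negative if and only if $\beta f'(0)\,\lambda(W)-\delta<0$, i.e.\ precisely the hypothesis \eqref{eq:W1}, $\lambda(W)f'(0)\beta/\delta<1$ (using $\lambda(W)\ge 0$; note $f'(0)\ge 0$ since $f:\R_+\to\R_+$ with $f(0)=0$). Therefore under \eqref{eq:W1} the Jacobian at $0$ is a stable matrix.

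Then I would invoke the Hartman--Grobman / linearized-stability theorem: because $g$ is $C^1$ near $0$ (which follows from $f$ being $C^1$ near $0$ and the polynomial nature of the rest of $g$) and $Dg(0)$ is Hurwitz, $0$ is a locally asymptotically stable equilibrium, and moreover the convergence is exponential, which yields the required $\gamma>0$ with $\|P(0)\|<\gamma\Rightarrow\lim_{t\to\infty}\|P(t)\|=0$. I would also remark that $g(0)=0$ since $f(0)=0$, confirming $0$ is genuinely an equilibrium.

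The only mildly delicate point — and the one I would be careful about rather than treat as routine — is verifying that the differentiation at $P=0$ really does kill all contributions except the $\beta f'(0)W$ term; one must check the product rule term by term and use $f(0)=0$ to discard the derivative of $(1-p_i)$ against $f(0)$. After that, everything is a direct appeal to the symmetry of $W$ and the classical linearization theorem, so there is no substantial obstacle; the main content of the theorem is really just the clean identification $Dg(0)=\beta f'(0)W-\delta I$ and hence the spectral threshold.
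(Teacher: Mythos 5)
Your proposal is correct and follows essentially the same route as the paper: compute $\nabla g(0)=\beta f'(0)W-\delta I$ (using $f(0)=0$ to discard the term from differentiating $1-p_i$), note the matrix is symmetric with real eigenvalues, and conclude via the standard linearized-stability theorem that the spectral condition \eqref{eq:W1} makes the Jacobian Hurwitz and hence $0$ locally asymptotically stable.
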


\begin{proof}
We see that $g(0)=0$, so $0 \in \R^n$ is an equilibrium for (\ref{dynamical system}). It remains to show that this solution is locally asymptotically stable. 
Appealing to a standard linearization 
result~\cite{Ve90}, it suffices to show that every eigenvalue of the Jacobian matrix $\nabla g(0)$ has a negative real part.
We compute
\begin{align*}
    \frac{\partial g_i}{\partial p_{j_0}}&=\begin{cases}
    \beta\sum_h\mathcal I_{i h}\mathcal I_{j_0 h}f'(\sum_j p_j\mathcal I_{jh})(1-p_i),&\text{ }j_0\neq i,\\
    \beta\sum_h\mathcal I_{i h}\mathcal I_{j_0 h}f'(\sum_j p_j\mathcal I_{jh})(1-p_i)-\beta\sum_{h}\mathcal I_{ih}f(\sum_j p_j\mathcal I_{j h})-\delta,&\text{ }j_0=i.
    \end{cases}
\end{align*}
We see that $ \nabla g(0)= \beta f'(0)W-\delta I$.
This matrix is symmetric and therefore has real eigenvalues.
Hence, 
it suffices that the 
largest eigenvalue of 
$\beta f'(0)W$ does not exceed
$\delta$, and the result follows.
\end{proof}

Theorem~\ref{thm: local asymptotic stability} extends  to the partitioned 
model in 
(\ref{eq:partition}).
In this case  $g_i(P(t))$ in the mean field ODE 
(\ref{dynamical system}) is defined as
\begin{equation}
g_i(P(t)):=\beta\sum_{k=1}^{K}\sum_{h\in E}\mathcal I_{i h}^{(k)}f_k\big(\sum_{j=1}^{n}p_j(t)\mathcal I_{j h}^{(k)}\big)(1-p_i(t))-\delta p_i(t),
\label{eq:mfpart}
\end{equation}
and we let 
 $W^{(k)}:= {\mathcal I}^{(k)} {{\mathcal I}^{(k)}}^T$. 

\begin{thm}\label{thm: generalized local asymptotic stability}
If
\begin{equation}
\lambda\left(\sum_{k=1}^{K} f_k'(0)W^{(k)}\right)\frac{\beta}{\delta} < 1 
\label{eq:Wk}
\end{equation}
then
$0 \in \R^n$ is a locally asymptotically stable equilibrium 
of (\ref{dynamical system}),
with $g_i$ defined in (\ref{eq:mfpart}).
\end{thm}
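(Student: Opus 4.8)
The plan is to mimic the proof of Theorem~\ref{thm: local asymptotic stability} verbatim, the only change being the bookkeeping needed to track the $K$ categories. First I would observe that $g(0)=0$ still holds, since $f_k(0)=0$ for every $k$ (each $f_k$ inherits the standing assumptions $f_k(0)=0$ and $f_k\in C^1$ near $0$), so $0\in\R^n$ is an equilibrium of the partitioned mean field ODE. As before, by the standard linearization result~\cite{Ve90} it suffices to check that every eigenvalue of the Jacobian $\nabla g(0)$ has negative real part.

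Next I would compute the Jacobian entrywise, exactly as in the unpartitioned case but with the extra sum over $k$. Differentiating
\[
g_i(P)=\beta\sum_{k=1}^{K}\sum_{h\in E}\mathcal I_{i h}^{(k)}f_k\Big(\sum_{j=1}^{n}p_j\,\mathcal I_{j h}^{(k)}\Big)(1-p_i)-\delta p_i
\]
gives, for $j_0\neq i$,
\[
\frac{\partial g_i}{\partial p_{j_0}}
=\beta\sum_{k=1}^{K}\sum_{h}\mathcal I_{i h}^{(k)}\mathcal I_{j_0 h}^{(k)}f_k'\Big(\sum_j p_j\mathcal I_{j h}^{(k)}\Big)(1-p_i),
\]
and for $j_0=i$ the same expression minus $\beta\sum_{k}\sum_h\mathcal I_{ih}^{(k)}f_k(\sum_j p_j\mathcal I_{jh}^{(k)})-\delta$. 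Evaluating at $P=0$, the $f_k(\cdot)$ terms vanish, each $f_k'(\cdot)$ becomes $f_k'(0)$, and the double sum $\sum_h \mathcal I_{ih}^{(k)}\mathcal I_{j_0h}^{(k)}$ is precisely $W^{(k)}_{i j_0}$. Hence
\[
\nabla g(0)=\beta\sum_{k=1}^{K} f_k'(0)\,W^{(k)}-\delta I.
\]

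Finally I would note that each $W^{(k)}=\mathcal I^{(k)}{\mathcal I^{(k)}}^T$ is symmetric, so the finite sum $\sum_k f_k'(0)W^{(k)}$ is symmetric and $\nabla g(0)$ has real eigenvalues; local asymptotic stability then holds iff the largest eigenvalue of $\beta\sum_k f_k'(0)W^{(k)}$ is strictly less than $\delta$, which is exactly hypothesis~(\ref{eq:Wk}). I do not anticipate a genuine obstacle here—the proof is a routine generalization—so the only point requiring a word of care is that $\sum_k f_k'(0)W^{(k)}$ need not be a nonnegative combination if some $f_k'(0)$ were negative; but since $f_k:\R_+\to\R_+$ with $f_k(0)=0$ forces $f_k'(0)\ge 0$, the matrix is in fact positive semidefinite, its largest eigenvalue $\lambda\big(\sum_k f_k'(0)W^{(k)}\big)$ is well-defined and nonnegative, and the stated spectral threshold makes sense exactly as in the graph case.
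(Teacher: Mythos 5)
Your proposal is correct and follows essentially the same route as the paper: compute the Jacobian of the partitioned $g$, evaluate at $P=0$ to obtain $\nabla g(0)=\beta\sum_{k}f_k'(0)W^{(k)}-\delta I$, use symmetry so the eigenvalues are real, and conclude via the standard linearization result that condition~(\ref{eq:Wk}) gives local asymptotic stability. Your closing remark that $f_k'(0)\geq 0$ (so the spectral threshold is meaningful) is a harmless extra observation not needed for the argument.
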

\begin{proof}
The proof of 
Theorem~\ref{thm: local asymptotic stability}
extends straightforwardly.
We compute
\begin{align*}
    \frac{\partial g_i}{\partial p_{j_0}}&=\begin{cases}
    \beta\sum_{k}\sum_h\mathcal I^{(k)}_{i h}\mathcal I^{(k)}_{j_0 h}f_k'(\sum_j p_j\mathcal I^{(k)}_{jh})(1-p_i),&\text{ }j_0\neq i,\\
    \beta\sum_{k}\sum_h\mathcal I^{(k)}_{i h}\mathcal I^{(k)}_{j_0 h}f_k'(\sum_j p_j\mathcal I^{(k)}_{jh})(1-p_i)-\beta\sum_{k}\sum_{h}\mathcal I^{(k)}_{ih}f_k(\sum_j p_j\mathcal I^{(k)}_{j h})-\delta,&\text{ }j_0=i,
    \end{cases}\\
    \Rightarrow\ &\frac{\partial g_i}{\partial p_{j_0}}|_{P=0}= \begin{cases}
    \beta \sum_{k}W^{(k)}_{i j}f_k'(0),&\text{ }j_0\neq i,\\
    \beta \sum_{k}W^{(k)}_{i j}f_k'(0)-\delta,&\text{ }j_0=i,
    \end{cases}
\end{align*}
and note that 
$$
\lambda(\beta \sum_{k}f'_k(0)W^{(k)}-\delta I)<0
\quad
\Leftrightarrow 
\quad
\lambda(\sum_{k}f'_k(0)W^{(k)})<\frac{\delta}{\beta}.
$$

\end{proof}

\subsection{Global asymptotic stability for the concave infection model}

We now show that
when $f$ is concave
the
condition in 
Theorem~\ref{thm: local asymptotic stability}
ensures global stability of the zero equilibrium, and hence guarantees that the disease dies out 
according to the mean field approximation.
\begin{defn}
Given a matrix $A$, define its symmetric version to be
$$
A^{(S)}:=(A+A^T)/2.
$$
\end{defn}
\begin{lemma}\label{lemma: spectrum inequality}
Suppose that $A$ and $B$ are $n\times n$ real matrices, and suppose that there exists a diagonal matrix $\Lambda$ such that for all $i\in \{1,2,\dots,n\},\ \Lambda_{i i}\geq 0$, and
$$
A=B-\Lambda.
$$
Then the largest eigenvalues of $A$ and $B$ satisfy $\lambda(A)\leq \lambda (B)$, and the largest eigenvalues of $A^{(S)}$ and $B^{(S)}$ also satisfy $\lambda(A^{(S)})\leq \lambda (B^{(S)})$.
\end{lemma}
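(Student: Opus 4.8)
The plan is to push everything through the Rayleigh--Ritz (Courant--Fischer) description of the top eigenvalue of a real symmetric matrix, namely $\lambda(S)=\max_{\|x\|=1}x^{T}Sx$. The single observation that makes this applicable is that $\Lambda$, being diagonal with $\Lambda_{ii}\ge 0$, is symmetric and positive semidefinite; in particular $\Lambda^{(S)}=\Lambda$. Taking symmetric parts of the hypothesis $A=B-\Lambda$ therefore gives the identity $A^{(S)}=B^{(S)}-\Lambda$ between genuine real symmetric matrices, and this is the form I would actually work with.

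For the statement about symmetric parts I would then argue pointwise on the unit sphere. For any $x\in\R^{n}$ with $\|x\|=1$,
\[
x^{T}A^{(S)}x \;=\; x^{T}B^{(S)}x-x^{T}\Lambda x \;=\; x^{T}B^{(S)}x-\sum_{i=1}^{n}\Lambda_{ii}x_i^{2}\;\le\; x^{T}B^{(S)}x\;\le\;\lambda\big(B^{(S)}\big),
\]
where the first inequality is exactly where $\Lambda_{ii}\ge 0$ enters. Maximising the left-hand side over $\|x\|=1$ gives $\lambda(A^{(S)})\le\lambda(B^{(S)})$. The inequality $\lambda(A)\le\lambda(B)$ is then the special case in which $A$ and $B$ are themselves symmetric, so that $A^{(S)}=A$ and $B^{(S)}=B$; this is precisely the setting in which the lemma is later invoked, since the matrices in question are diagonal modifications of the symmetric matrices $W$ and $W^{(k)}$, and, following the convention fixed earlier in the paper, it is the setting in which ``largest eigenvalue'' is unambiguous.

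I do not expect a real obstacle here. The one point that needs care is recognising that $\Lambda$ must enter the argument as a \emph{symmetric, positive semidefinite} matrix — so that the symmetric-part identity $A^{(S)}=B^{(S)}-\Lambda$ is valid and so that $x^{T}\Lambda x\ge 0$ — and both facts are immediate because $\Lambda$ is diagonal with nonnegative entries. Everything else (verifying the symmetric-part identity, the pointwise estimate, and passing to the maximum over the sphere) is routine. If one wanted the first inequality for genuinely non-symmetric $A,B$ one would need extra structure, e.g.\ that $B$ has nonnegative off-diagonal entries so that $A\le B$ entrywise and $\lambda(A)\le\lambda(B)$ follows from Perron--Frobenius monotonicity of the spectral abscissa, but that generality is not required for the applications.
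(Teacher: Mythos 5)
Your proof is correct and takes essentially the same route as the paper, which also bounds $2\lambda(A^{(S)})=x^{T}Ax+x^{T}A^{T}x=x^{T}(B+B^{T})x-2x^{T}\Lambda x\le 2\lambda(B^{(S)})$ via a unit eigenvector of $A^{(S)}$, i.e.\ the same Rayleigh-quotient step resting on $x^{T}\Lambda x\ge 0$. The only divergence is the first inequality, which the paper waves through with ``may be shown similarly'' while you restrict it to symmetric $A,B$; your caveat is well placed, since for genuinely non-symmetric matrices that claim needs extra structure (e.g.\ nonnegative off-diagonal entries), and only the symmetric-part inequality is actually invoked in Theorem~\ref{thm: global asymptotic condition}.
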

\begin{proof}
Let $x$ be a unit eigenvector associated with  $\lambda(A^{(S)})$. We have
\begin{align*}
    2\lambda(A^{(S)})&=x^TAx+x^TA^Tx\\
    &=x^TBx+x^TB^Tx - 2x^T\Lambda x\\
    &=\sum_{i,j=1}^n(b_{ij}+b_{ji})x_ix_j-2\sum_{i=1}^n\Lambda_{i i} x_i^2\\
    &\leq x^T(B+B^T) x\\
    &\leq \max\{x^T(B+B^T)x\ |\ x^Tx = 1\} = 2\lambda(B^{(S)}).
\end{align*}
The inequality $\lambda(A)\leq \lambda(B)$ may be shown similarly.
\end{proof}

 \begin{thm}\label{thm: global asymptotic condition}
Suppose $f$ is concave.
If
\[
\lambda(W) \frac{f'(0) \beta}{\delta} < 1,
\] 
then $0 \in \R^n$ is a globally asymptotically stable equilibrium of 
(\ref{dynamical system})--(\ref{eq:gidef}); so
$\lim_{t\to\infty}||P(t)||=0$ 
for any valid initial condition (that is, with 
$0 \le p(0)_i \le 1$). 
\end{thm}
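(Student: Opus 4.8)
The plan is to prove global asymptotic stability by constructing a Lyapunov-type argument that exploits concavity to reduce the nonlinear dynamics to a linear comparison system. First I would observe that the state space $[0,1]^n$ is forward-invariant for (\ref{dynamical system})--(\ref{eq:gidef}): on the face $p_i = 0$ we have $g_i \ge 0$, and on the face $p_i = 1$ we have $g_i = -\delta < 0$, so trajectories starting with $0 \le p(0)_i \le 1$ remain valid for all time, and in particular $P(t)$ stays bounded and the solution is global in $t$. Since $f(0)=0$, $f$ is $C^1$ near $0$, and $f$ is concave on $\R_+$, we have the key pointwise bound $f(s) \le f'(0)\, s$ for all $s \ge 0$ (the tangent line at the origin lies above the graph). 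This is the structural inequality that makes concavity do the work.

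Next I would set up a differential inequality. Using $1 - p_i(t) \le 1$ and $f\big(\sum_j p_j \mathcal I_{jh}\big) \le f'(0) \sum_j p_j \mathcal I_{jh}$ in (\ref{eq:gidef}), we get, componentwise,
\[
\frac{dp_i}{dt} \le \beta f'(0) \sum_{h} \mathcal I_{ih} \sum_j p_j \mathcal I_{jh} - \delta p_i = \big(\beta f'(0) W P(t) - \delta P(t)\big)_i .
\]
So $P(t)$ is dominated by the solution of the linear system $\dot Q = (\beta f'(0) W - \delta I) Q$ with $Q(0) = P(0)$, provided we stay in the nonnegative orthant — which we do, by invariance. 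The comparison principle for cooperative/quasimonotone systems applies here because the off-diagonal entries of $\beta f'(0) W - \delta I$ are nonnegative, so the vector field on the right is quasimonotone increasing; hence $0 \le P(t) \le Q(t)$ for all $t \ge 0$. The matrix $M := \beta f'(0) W - \delta I$ is symmetric, and by hypothesis $\lambda(W) \beta f'(0)/\delta < 1$, i.e. $\lambda(M) = \beta f'(0)\lambda(W) - \delta < 0$. Therefore $\|Q(t)\| \le e^{\lambda(M) t}\|Q(0)\| \to 0$, and by the squeeze $\|P(t)\| \to 0$ as well. Combined with local asymptotic stability (Theorem~\ref{thm: local asymptotic stability}), which already gives that $0$ is Lyapunov stable, this yields global asymptotic stability.

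A cleaner alternative that avoids invoking a comparison theorem is a direct Lyapunov argument: take $V(P) = \tfrac12 \|P\|^2 = \tfrac12 P^T P$ (or, to get a sharper rate, $V(P) = P^T u$ where $u > 0$ is the Perron eigenvector of $W$, which lies in the interior of the orthant). Along trajectories in $[0,1]^n$,
\[
\dot V = \sum_i p_i \dot p_i \le P^T\big(\beta f'(0) W - \delta I\big) P \le \lambda(M)\, \|P\|^2 = 2\lambda(M)\, V,
\]
using the bound from the previous paragraph together with $p_i \ge 0$, and then the Rayleigh quotient inequality for the symmetric matrix $M$. Since $\lambda(M) < 0$, $V$ decays exponentially, so $\|P(t)\| \to 0$ for every valid initial condition. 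I would also remark that the same proof applies verbatim to the partitioned model (\ref{eq:partition})--(\ref{eq:mfpart}) when each $f_k$ is concave, with $M$ replaced by $\beta \sum_k f_k'(0) W^{(k)} - \delta I$ and condition (\ref{eq:Wk}) in force.

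The main obstacle — though a mild one — is justifying the componentwise differential inequality and its consequence carefully: one must confirm that $P(t)$ never leaves $[0,1]^n$ (so that the $(1-p_i) \le 1$ and $p_i \ge 0$ steps are legitimate for all $t$, not just initially) and, in the comparison-principle version, that the right-hand side is quasimonotone so Kamke's theorem applies. In the Lyapunov version the subtlety instead migrates into checking that $\dot V \le P^T M P$ really holds — this needs $p_i \ge 0$ to drop the $-\beta p_i \sum_h \mathcal I_{ih} f(\cdots)$ term and to use $f(s) \le f'(0) s$ with nonnegative weights $p_i$ — but this is exactly where invariance of $[0,1]^n$ is used, so the two approaches rest on the same single nontrivial fact. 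Everything else is routine.
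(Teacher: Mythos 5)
Your proof is correct, but it follows a genuinely different route from the paper. The paper does not argue along trajectories at all: it invokes a global-stability criterion (Lemma $1'$ of \cite{GSAcriterion}) requiring the symmetrized Jacobian $(\nabla g(P))^{(S)}$ to have strictly negative eigenvalues at every $P\neq 0$, and verifies this by splitting $\nabla g(P)=B-\Lambda$ with $\Lambda$ a nonnegative diagonal matrix, applying a spectrum-monotonicity lemma, and then using concavity in the form $f'(s)\le f'(0)$ together with the Perron--Frobenius theorem to bound $\lambda(B^{(S)})\le\lambda(\nabla g(0))=\lambda(\beta f'(0)W-\delta I)<0$. You instead use concavity in the chord form $f(s)\le f'(0)s$, establish forward invariance of $[0,1]^n$, and compare with the linear cooperative system $\dot Q=(\beta f'(0)W-\delta I)Q$ (or, equivalently, run the quadratic Lyapunov function $\tfrac12\|P\|^2$ with a Rayleigh-quotient bound). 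Your version is more self-contained --- it needs no external Jacobian criterion --- and it yields an explicit exponential rate $e^{(\beta f'(0)\lambda(W)-\delta)t}$ that mirrors the bound proved for the exact stochastic model in Theorem~\ref{thm: exponential decay}; it also makes explicit the invariance of $[0,1]^n$, which the paper's proof uses only implicitly (e.g.\ when asserting $\Lambda_{ii}\ge 0$ and the componentwise inequality $B+\delta I\le\nabla g(0)+\delta I$, both of which presuppose $0\le p_i\le 1$). Two small caveats on your side: the forward-invariance check should formally be a Nagumo/subtangentiality argument (your boundary inequalities are the right ones, so this is routine), and your parenthetical alternative Lyapunov function $V(P)=P^Tu$ with $u$ the Perron eigenvector requires $W$ to be irreducible for $u>0$, whereas your main $\tfrac12\|P\|^2$ argument needs no such assumption. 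Your remark that the argument extends verbatim to the partitioned model matches Theorem~\ref{thm: generalized global asymptotic stability}.
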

\begin{proof}
From the global asymptotic stability result in \cite[Lemma $1'$ ]{GSAcriterion} 
it is sufficient to show that 
all eigenvalues of the symmetric matrix $(\nabla g(P))^{(S)}$
are strictly less than $0$, for all $P\neq 0$.
We have
\begin{align*}
    \frac{\partial g_i}{\partial p_{j_0}}&=\begin{cases}
    \beta\sum_h\mathcal I_{i h}\mathcal I_{j_0 h}f'(\sum_j p_j\mathcal I_{jh})(1-p_i),&\ j_0\neq i,\\
    \beta\sum_h\mathcal I_{i h}\mathcal I_{j_0 h}f'(\sum_j p_j\mathcal I_{jh})(1-p_i)-\beta\sum_{h}\mathcal I_{ih}f(\sum_j p_j\mathcal I_{j h})-\delta,&\ j_0=i.\\
    \end{cases}\\
\end{align*}

Letting $B$ denote the $n\times n$ matrix given by
\begin{align*}
    B_{i j_0}=\begin{cases}
    \beta\sum_h\mathcal I_{i h}\mathcal I_{j_0 h}f'(\sum_j p_j\mathcal I_{jh})(1-p_i),&\ j_0\neq i,\\
    \beta\sum_h\mathcal I_{i h}\mathcal I_{j_0 h}f'(\sum_j p_j\mathcal I_{jh})(1-p_i)-\delta,&\ j_0=i,\\
    \end{cases}\\
\end{align*}
we have $\nabla g(P)= B-\Lambda$, where $\Lambda$ is the $n\times n$ diagonal matrix where for all $i\in \{1,2,\dots\}$, $\Lambda_{i i}:=\beta\sum_h \mathcal I_{i h}f(\sum_j p_j \mathcal I_{j h})\geq 0$. On the one hand Lemma~\ref{lemma: spectrum inequality} now yields $\lambda((\nabla g(P))^{(S)})\leq \lambda(B^{(S)})$; on the other hand, note that $B+\delta I\leq \nabla g(0)+\delta I$, where we interpret the inequality in a componentwise sense, and where we use $f'(\sum_j p_j \mathcal I_{j h})\leq f'(0)$,
since $f$ is concave. Hence $B^{(S)}+\delta I\leq \nabla g(0)+\delta I$, and since $B^{(S)}+\delta I$ has only non-negative entries, appealing to the Perron-Frobenius theorem, we have $\lambda(B^{(S)})\leq \lambda(\nabla g(0))$.

Combining these inequalities and using the spectral condition in the statement of the theorem, we deduce that
\[
\lambda\left(  (\nabla g(P))^{(S)} \right)
\leq
\lambda(\nabla g(0)) = \lambda \left(  \beta f'(0)W-\delta I \right) < 0,
\]
as required.
\end{proof}
A straightforward
adaptation of the proof of Theorem~\ref{thm: global asymptotic condition} yields the following global asymptotic stability result for 
the more general partitioned model.

\begin{thm}\label{thm: generalized global asymptotic stability}
Suppose all $f_k$ are concave.
If
\[
\lambda(\sum_{k = 1}^{K}f_k'(0)W^{(k)}) 
\frac{\beta}{\delta}
< 1,
\] then
$0 \in \R^n$ is a globally asymptotically stable 
equilibrium of (\ref{dynamical system}), with 
$g_i$ defined in 
(\ref{eq:mfpart}). 
\end{thm}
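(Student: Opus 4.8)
The plan is to follow the proof of Theorem~\ref{thm: global asymptotic condition} almost verbatim, with the single matrix $W$ replaced by the family $\{W^{(k)}\}_{k=1}^{K}$. As there, I would start from the global asymptotic stability criterion of \cite{GSAcriterion} (Lemma $1'$): since $g(0)=0$, it is enough to prove that every eigenvalue of the symmetric part $(\nabla g(P))^{(S)}$ is strictly negative for every admissible $P \neq 0$ (i.e.\ $0 \le p_i \le 1$).

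Using the Jacobian entries already computed in the proof of Theorem~\ref{thm: generalized local asymptotic stability}, I would write $\nabla g(P) = B - \Lambda$, where $\Lambda$ is the diagonal matrix with $\Lambda_{ii} := \beta \sum_{k}\sum_{h} \mathcal{I}^{(k)}_{ih} f_k\big(\sum_j p_j \mathcal{I}^{(k)}_{jh}\big) \ge 0$ (using $f_k \ge 0$), and $B_{i j_0} = \beta \sum_k \sum_h \mathcal{I}^{(k)}_{ih}\mathcal{I}^{(k)}_{j_0 h} f_k'\big(\sum_j p_j \mathcal{I}^{(k)}_{jh}\big)(1-p_i)$ for $j_0 \neq i$, with an extra $-\delta$ on the diagonal. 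By Lemma~\ref{lemma: spectrum inequality}, $\lambda((\nabla g(P))^{(S)}) \le \lambda(B^{(S)})$.

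Next I would establish $B + \delta I \le \nabla g(0) + \delta I$ entrywise. Concavity of each $f_k$ gives $f_k'\big(\sum_j p_j \mathcal{I}^{(k)}_{jh}\big) \le f_k'(0)$, and on the admissible region $0 \le 1-p_i \le 1$; together with $\sum_h \mathcal{I}^{(k)}_{ih}\mathcal{I}^{(k)}_{j_0 h} = W^{(k)}_{i j_0}$ this yields the entrywise bound, the $-\delta$ cancelling on the diagonal. Moreover, since a concave $f_k$ with $f_k(0)=0$ and $f_k \ge 0$ is automatically nondecreasing, each $f_k' \ge 0$, so $B + \delta I$ — and hence $B^{(S)} + \delta I$ — is entrywise nonnegative and dominated by $\nabla g(0) + \delta I$; Perron--Frobenius then gives $\lambda(B^{(S)}) \le \lambda(\nabla g(0))$. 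Recalling from the proof of Theorem~\ref{thm: generalized local asymptotic stability} that $\nabla g(0) = \beta \sum_k f_k'(0) W^{(k)} - \delta I$, the hypothesis $\lambda\big(\sum_k f_k'(0) W^{(k)}\big)\beta/\delta < 1$ gives $\lambda(\nabla g(0)) < 0$. Chaining $\lambda((\nabla g(P))^{(S)}) \le \lambda(B^{(S)}) \le \lambda(\nabla g(0)) < 0$ completes the argument.

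The main obstacle I expect is not the algebra but the book-keeping around the hypotheses: one must know that the unit cube $[0,1]^n$ is forward invariant for (\ref{dynamical system})--(\ref{eq:mfpart}), so that the bounds $0 \le 1-p_i \le 1$ and the nonnegativity claims hold along every trajectory, and one must check that the smoothness assumptions of the \cite{GSAcriterion} criterion are met — strictly this wants $g$ to be $C^1$ on the whole region, i.e.\ each $f_k$ to be $C^1$ on $\R_+$ rather than only near $0$. Under concavity one at least has one-sided derivatives everywhere together with the monotonicity $f_k'(\cdot)\le f_k'(0)$ used above, so the domination argument survives; this is the point I would be most careful to state precisely.
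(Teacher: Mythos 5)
Your proposal is correct and is exactly what the paper intends: the paper's ``proof'' of Theorem~\ref{thm: generalized global asymptotic stability} is just the remark that the argument for Theorem~\ref{thm: global asymptotic condition} adapts straightforwardly, and you carry out precisely that adaptation (same decomposition $\nabla g(P)=B-\Lambda$, Lemma~\ref{lemma: spectrum inequality}, the concavity bound $f_k'(\cdot)\le f_k'(0)$, Perron--Frobenius, and the spectral hypothesis applied to $\nabla g(0)=\beta\sum_k f_k'(0)W^{(k)}-\delta I$). Your additional remarks on forward invariance of $[0,1]^n$, nonnegativity of $f_k'$, and the smoothness needed for the \cite{GSAcriterion} criterion address points the paper leaves implicit, and do not change the route.
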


\section{Simulations to Test the Spectral Condition}
\label{sec:more_sim}
We now show the results of experiments that 
test the sharpness of our spectral vanishing condition.
Here, we used 
the concave functions $f(x)=2\log(1+x)$ (on the left of Figure~\ref{fig: both_i_infty vs beta} and in
Figure~\ref{fig: 2log i_t vs t}) and $f(x) = \mathrm{arctan}(x)$ (on the left in 
Figure~\ref{fig: both_i_infty vs beta} and in
Figure~\ref{fig: atan i_t vs t}) to construct partitioned models
with $f_1(x)=x$ and $f_k(x)=f(x)$ for all $k\geq 2$. We fixed a hypergraph with $n = 400$ nodes, $400$ edges, $200$ hyperedges of size $3$, $100$ hyperedges of size $4$ and $50$ hyperedges of size $5$.
At time zero, each node
was infected with independent probability 
$i_0=0.5$ 
and we used a recovery rate of 
$\delta =1$.
In addition to the mean field ODE, we also simulated the microscale model, averaged over $5$ runs, using
the discretization scheme 
described in Section~\ref{subsec:comp},
with $\Delta t=0.1$.

In Figure~\ref{fig: both_i_infty vs beta} 
the 
asterisks (red) show the corresponding
proportion of infected individuals
according to the
mean field model, 
$\sum_i p_i(t)/n$,
at time $t = 150$,
for a range of different 
$\beta$ between $0$ and $0.2$: $\beta\in \{(0.01)k\ |\ k\in \{0,\dots,20\}\}$.
The crosses (blue) show the 
corresponding proportion of infected individuals
from the microscale model,
$\sum_i X_i(t)/n$,
at time $t=150$. 
 The vertical green line represents the critical value $\beta_c:=\delta/\lambda(\sum_{k=1}^{K}f_k'(0)W^{(k)})$ (we have $\beta_c\approx 0.0265$ on the left and $\beta_c\approx 0.0490$ on the right).
 
 For the mean field model,
 we know from 
 Theorem~\ref{thm: generalized global asymptotic stability} that 
  $\beta < \beta_c$
 guarantees global stability of the 
 zero-infection state. We see that $\beta_c$ also lies close to the threshold beyond which 
 extinction 
 of the disease is lost in the mean field model. 
 For the individual-level stochastic model,
 Theorem~\ref{thm: generalized exponential decay}
 below
 shows that 
  $\beta < \beta_c$ is also 
  sufficient for eventual extinction of the disease.
  This is consistent with the results in Figure~\ref{fig: both_i_infty vs beta}.
  
\begin{figure}[htp]
    \centering
    \includegraphics[width=0.45\textwidth]{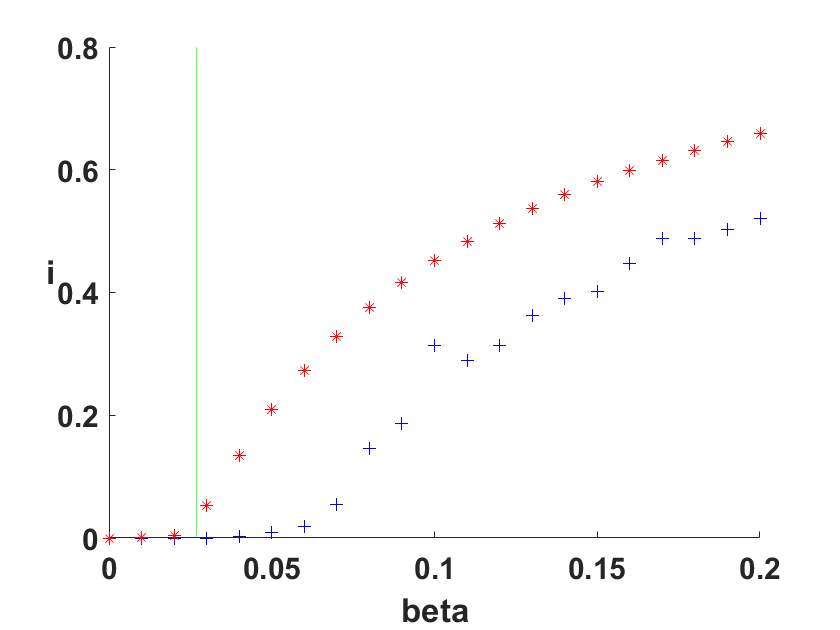}
    \includegraphics[width=0.45\textwidth]{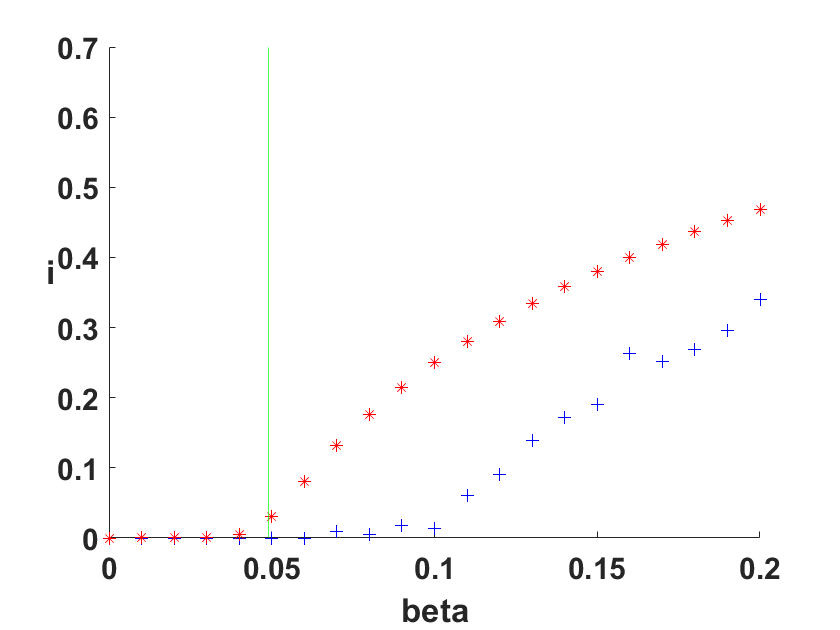}
    \caption{Left: infection function based on $2\log(1+x)$.
    Right: infection function based on $\mathrm{arctan}(x)$.
     For different choices of infection strength 
      $\beta$ (horizontal axis), we show 
      the proportion of infection individuals at time 
      $ t = 150$ (vertical axis) for the mean 
      field approximation 
   (\ref{dynamical system}) with (\ref{eq:mfpart})
   in red asterisks and 
   for the   
   individual-level stochastic model 
     (\ref{infection rate random})
     in blue crosses.
      The spectral bound arising from our analysis is show as a green vertical line.
    }
    \label{fig: both_i_infty vs beta}
\end{figure}

In the left of 
Figures~\ref{fig: 2log i_t vs t} 
and \ref{fig: atan i_t vs t} 
we show individual 
trajectories 
of the 
proportion of infected individuals,
$\sum_i p_i(t)/n$,
according to the
mean field model,
for a range of $\beta$ values.
For the same range of $\beta$ values, the plots on the right of these figures show the 
corresponding proportion of infected individuals
from the microscale model,
$\sum_i X_i(t)/n$.
 The curves are colored in red if the spectral vanishing condition $\beta < \beta_c$ is satisfied.
We see qualitative agreement between the 
mean field and individual-level models, and 
extinction for the $\beta$ values below the spectral threshold.

\begin{figure}[htp]
    \centering
    \includegraphics[width=15cm, height=7cm]{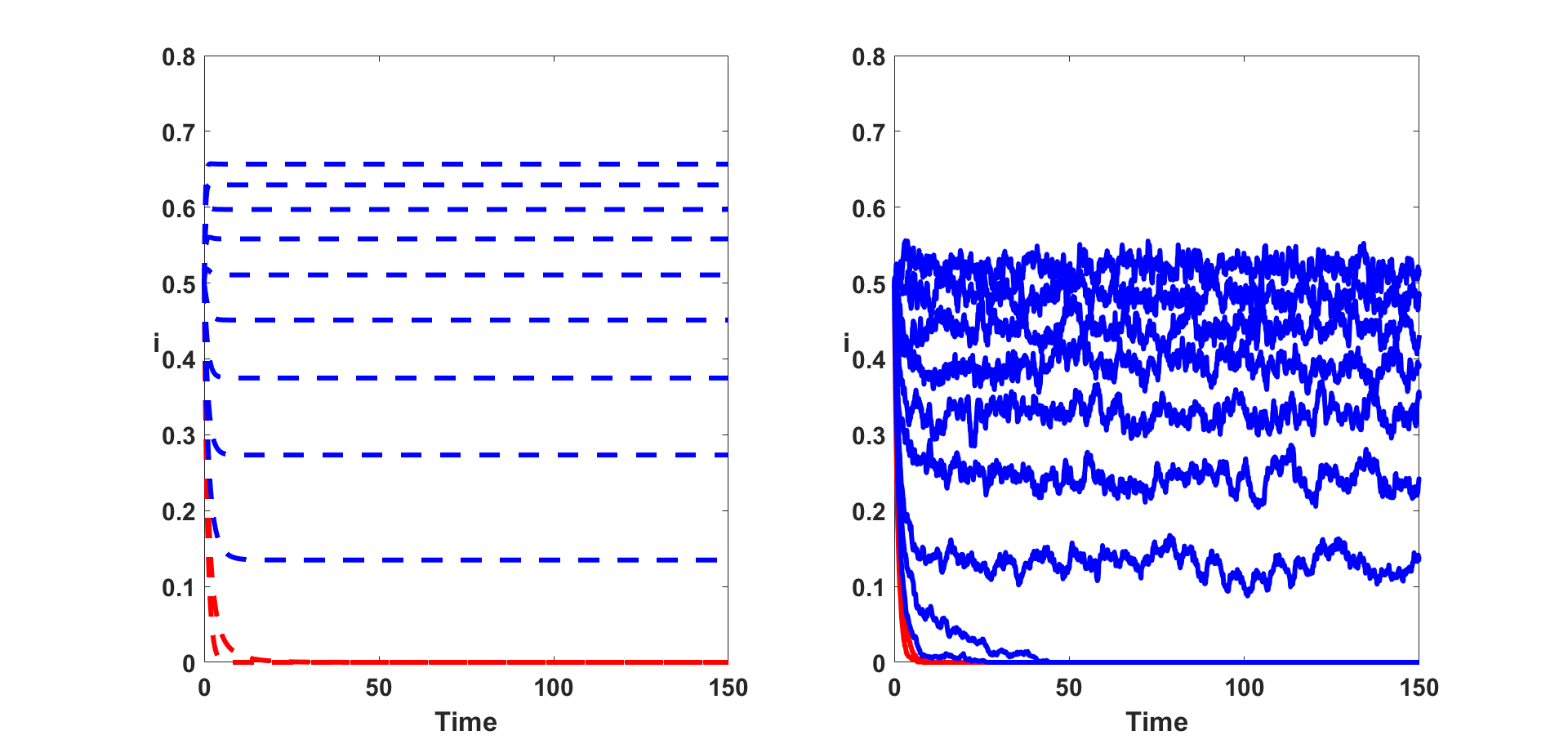}
    \caption{Results with the arctan infection function.
    Left: proportion of infected individuals using the
mean field model.
Right: proportion of infected individuals using the
individual-level model.
From bottom to top, the $\beta$ values used were
From bottom to top, the $\beta$ values used were
$\beta\in \{(0.02)k\ |\ k\in \{0,\dots,10\}\}$.
Cases where $\beta$ is below the spectral 
bound are colored in red.
    }
    \label{fig: 2log i_t vs t}
\end{figure}

\begin{figure}[htp]
    \centering
    \includegraphics[width=15cm, height=7cm]{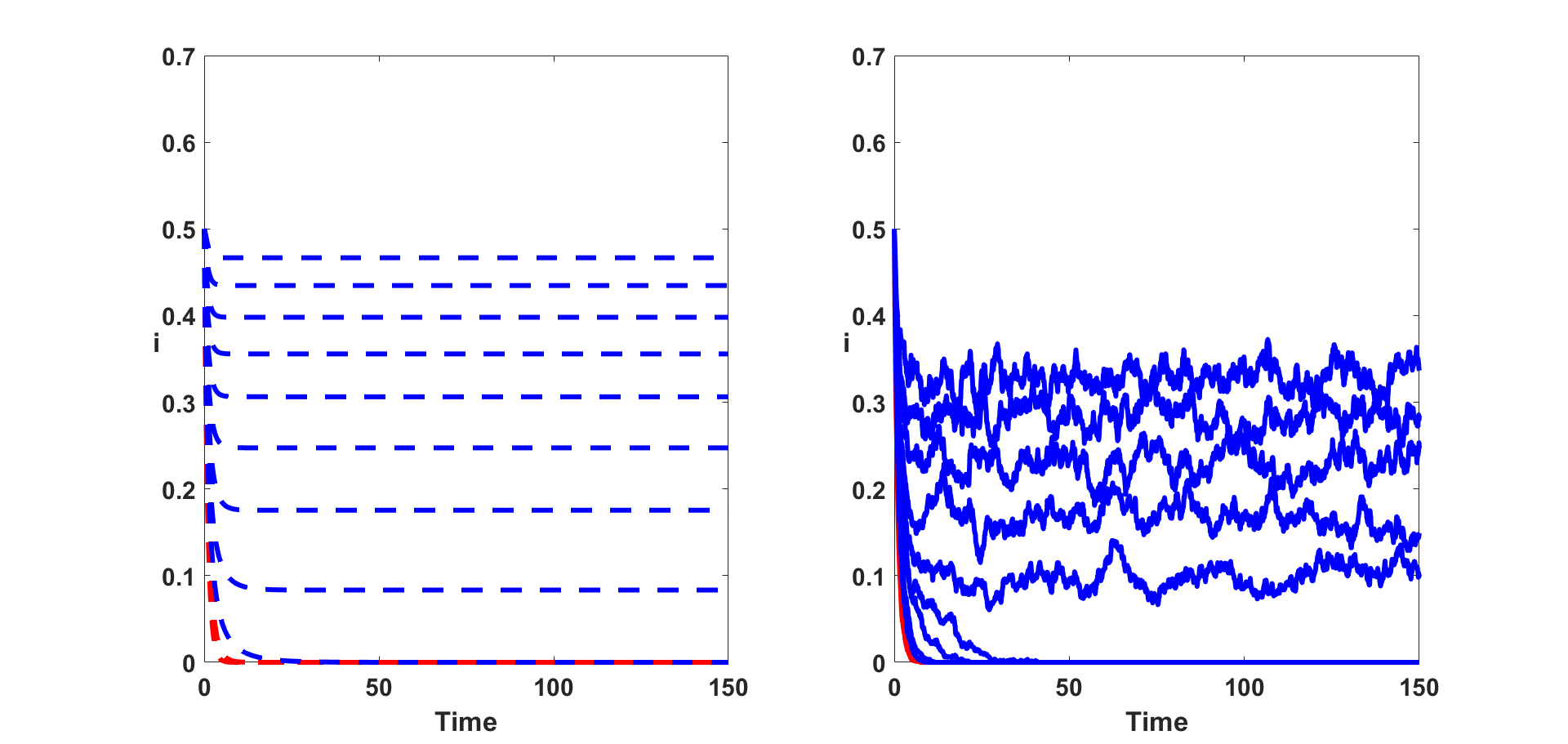}
    \caption{Results with the arctan infection function.
    Left: proportion of infected individuals using the
mean field model.
Right: proportion of infected individuals using the
individual-level model.
From bottom to top, the $\beta$ values used were
$\beta\in \{(0.02)k\ |\ k\in \{0,\dots,10\}\}$.
Cases where $\beta$ is below the spectral 
bound are colored in red.
    }
    \label{fig: atan i_t vs t}
\end{figure}

Having derived and tested spectral conditions 
that concern extinction of the disease 
at the mean field approximation level,
in the next section 
we study the microscale model directly.

\section{Exact Model}
\label{sec:exact_model}

To proceed, we recall our assumption that at time zero each node has
the same, independent, probability, $i_0$, of being infectious;
so $\mathbb{P} ( X_j(0) = 1 ) = i_0$ for
all $1 \le j \le n$.
This implies that 
$n\, i_0$ is the expected number of infectious individuals at time zero.

We are interested in the stochastic process
$\sum_i X_i(t)$, which records the number of 
infected individuals.
Our analysis generalizes arguments in  
\cite{epidemicsSpread}, which considered a stochastic 
SIS 
model on a graph
with $f$ as  
the identity map.

\subsection{Extinction}
\label{subsec:ext}

Our first result shows that the spectral condition 
arising from the mean field analysis in Theorems~\ref{thm: local asymptotic stability}
and \ref{thm: global asymptotic condition} is
also 
relevant
to the probability of extinction in the 
individual-level model.

\begin{thm}\label{thm: exponential decay}
Suppose $f$ is concave in the hypergraph infection model (\ref{infection rate random}). Then
\[
\mathbb P\left(\sum_i X_i(t)> 0\right)\leq n\, i_0\, \exp\left((\beta f'(0)\lambda(W)-\delta)t\right).
\]
Hence, if 
$
\lambda(W)
f'(0)
\beta/\delta <1
$ then the disease vanishes at an exponential rate.
\end{thm}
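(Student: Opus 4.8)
The plan is to bound the non-extinction probability by the expected number of infectives via Markov's inequality, and then control that expectation through a \emph{linear} differential inequality obtained from concavity. First I would write
\[
\mathbb P\Big(\sum_i X_i(t)>0\Big)=\mathbb P\Big(\sum_i X_i(t)\ge 1\Big)\le \mathbb E\Big[\sum_i X_i(t)\Big]=\sum_i p_i(t),
\]
where $p_i(t):=\mathbb E[X_i(t)]=\mathbb P(X_i(t)=1)$, so it suffices to show $\sum_i p_i(t)\le n\,i_0\exp\big((\beta f'(0)\lambda(W)-\delta)t\big)$.

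Next I would use the Kolmogorov forward equations for the finite-state continuous-time Markov chain $X(t)$, with infection rates as in (\ref{infection rate random}) and recovery rate $\delta$, to obtain
\[
\frac{d}{dt}p_i(t)=\beta\,\mathbb E\Big[(1-X_i(t))\sum_{h\in E}\mathcal I_{ih}\,f\big(\textstyle\sum_j X_j(t)\mathcal I_{jh}\big)\Big]-\delta\,p_i(t).
\]
Then bound $1-X_i(t)\le 1$ and apply the key estimate: since $f$ is concave with $f(0)=0$ and $C^1$ near $0$, the graph of $f$ lies below its tangent at the origin, so $f(x)\le f'(0)x$ for all $x\ge 0$. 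Using this with $x=\sum_j X_j(t)\mathcal I_{jh}$, taking expectations, and recalling $\sum_h\mathcal I_{ih}\mathcal I_{jh}=W_{ij}$, I get the componentwise differential inequality
\[
\frac{d}{dt}p(t)\le\big(\beta f'(0)W-\delta I\big)\,p(t),\qquad p(t):=(p_1(t),\dots,p_n(t))^T.
\]

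Then I would invoke the standard comparison principle for systems of ODEs: the matrix $M:=\beta f'(0)W-\delta I$ has nonnegative off-diagonal entries (as $W$ is entrywise nonnegative), hence is Metzler, so $e^{Mt}$ is entrywise nonnegative and the inequality together with $p(0)=i_0\mathbf 1$ propagates to $p(t)\le e^{Mt}p(0)=i_0\,e^{Mt}\mathbf 1$ componentwise. Summing and using Cauchy--Schwarz,
\[
\sum_i p_i(t)=\mathbf 1^T p(t)\le i_0\,\mathbf 1^T e^{Mt}\mathbf 1\le i_0\,\|\mathbf 1\|\,\|e^{Mt}\mathbf 1\|\le i_0\,\|\mathbf 1\|^2\,\|e^{Mt}\|.
\]
Since $M$ is symmetric with largest eigenvalue $\beta f'(0)\lambda(W)-\delta$, we have $\|e^{Mt}\|=\exp\big((\beta f'(0)\lambda(W)-\delta)t\big)$, and $\|\mathbf 1\|^2=n$, which gives the claimed bound; the final assertion is then immediate, since $\lambda(W)f'(0)\beta/\delta<1$ makes the exponent negative.

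The main obstacle I anticipate is making the comparison step fully rigorous: one must check that the vector differential inequality genuinely yields $p(t)\le e^{Mt}p(0)$, which relies on $M$ being essentially nonnegative (equivalently $W\ge 0$ entrywise), so that the associated flow is order-preserving. The only other points requiring care are the tangent-line inequality $f(x)\le f'(0)x$, which uses exactly the standing assumptions $f(0)=0$ and differentiability at $0$, and the interchange of expectation and differentiation, which is justified because the chain has finitely many states.
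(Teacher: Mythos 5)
Your proposal is correct, but it reaches the linear bound by a genuinely different route from the paper. The paper introduces an auxiliary $\N^n$-valued Markov process $Y$ with \emph{linear} transition rates $\beta f'(0)\sum_j W_{ij}Y_j$ (up) and $\delta$ (down), argues via the concavity bound $f(x)\le f'(0)x$ that $Y$ stochastically dominates $X$, and then exploits the fact that the master equation for $\E[Y]$ closes exactly into the linear ODE $Q'(t)=(\beta f'(0)W-\delta I)Q(t)$, which it solves and bounds by the spectral radius plus Cauchy--Schwarz. You instead work directly with the exact process: Markov's inequality, the Kolmogorov forward equation for $p_i(t)=\E[X_i(t)]$, the bounds $1-X_i\le 1$ and $f(x)\le f'(0)x$ (the same key concavity inequality as in the paper, valid since $f(0)=0$ and $f$ is differentiable at $0$), yielding the componentwise differential inequality $p'(t)\le (\beta f'(0)W-\delta I)p(t)$, which you close with a Metzler/quasimonotone comparison argument before finishing with the identical spectral and Cauchy--Schwarz steps. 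What your route buys is that it avoids constructing a dominating process on a different state space and justifying stochastic domination (which the paper does rather informally); what it costs is the comparison lemma you flag, which indeed needs $W\ge 0$ entrywise and $f'(0)\ge 0$. That step can be made fully rigorous without citing Kamke's theorem: write $p'(t)=Mp(t)-r(t)$ with $r(t)\ge 0$ componentwise, so by variation of constants
\[
p(t)=e^{Mt}p(0)-\int_0^t e^{M(t-s)}r(s)\,ds\le e^{Mt}p(0),
\]
using only the entrywise nonnegativity of $e^{M(t-s)}$ for the Metzler matrix $M=\beta f'(0)W-\delta I$. With that detail supplied, your argument is complete and gives exactly the stated bound $n\,i_0\,\exp\bigl((\beta f'(0)\lambda(W)-\delta)t\bigr)$.
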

\begin{proof}
Consider the continuous time Markov process $\{(Y_i(t))_{t\geq 0}\}_{i=1}^{n}$ 
taking values in $\N^n$, with transition of states defined for every $1 \le i\le n$ and $\ t\geq 0$ by
$$
\begin{cases}
k&\rightarrow\ k+1\text{, with rate }\beta f'(0)\sum_j W_{i j}Y_j(t),\\
k&\rightarrow\ k-1\text{, with rate }\delta.
\end{cases}
$$
This new process is introduced here purely for the purpose of analysis. However, it may be interpreted 
as a disease model where the state of each individual is represented by a non-negative integer that indicates severity of infection. Here, exposure to 
highly infected individuals raises the chance 
of an increase in infection severity.

Suppose also that $X_i(0)=Y_i(0)$
for all $1 \le i \le n$. Since $f$ is concave,
\[
    \beta \sum_h\mathcal I_{i h}f(\sum_j X_j(t)\mathcal I_{j h}) 
    \leq 
     \beta \sum_h\mathcal I_{i h}f'(0) \sum_j X_j(t)\mathcal I_{j h} =
    \beta f'(0)\sum_{i j}W_{i j}X_j(t),
\]
from which we see that $Y_i$ stochastically dominates $X_i$. Hence
$$
\mathbb P\left(\sum_i X_i(t)>0\right)\leq \mathbb P\left(\sum_i Y_i(t)>0\right)\leq \sum_i q_i(t),
$$
where $q_i(t):=\E[Y_i(t)]$.
In terms of the 
 Chapman--Kolmogorov Equation, 
 or Chemical Master Equation,
 \cite{Gil91},
 we have 
\[
\frac{d q_i(t)}{dt}=\beta f'(0)\sum_{j}W_{i j}q_j(t)-\delta q_i(t).
\]
Letting $Q(t) = [q_1(t), q_2(t), \ldots, q_n(t)]^T$,
this linear ODE system solves to give
\[
Q(t)=\exp\left(t(\beta f'(0)W-\delta I)\right)Q(0).
\]
The matrix $\exp\left(t(\beta f'(0)W-\delta I)\right)$ is symmetric and has spectral radius 
$\exp\left((\beta f'(0)\lambda(W)-\delta I)t\right)$.
Hence, in Euclidean norm,
$$
||Q(t)|| \leq \exp((\beta f'(0)\lambda(W)-I)t)||Q(0)||. 
$$
Since $||Q(0)|| =\sqrt{n} \, i_0 $ and,
by Cauchy--Schwarz,
\[
\sum_i q_i(t)\leq \sqrt{n} ||Q(t)||,
\]
the proof is complete. 
\end{proof}
 We deduce, analogously to \cite{epidemicsSpread}, the following corollary.
\begin{corollary}
\label{cor:et}
Suppose $f$ is concave in the hypergraph infection model (\ref{infection rate random}).
 Let $\tau$ denote the time of extinction of the disease and suppose  
 $\lambda(W) f'(0) \beta <\delta$, then
 \[
 \E[\tau]\leq\frac{\log n+1}{\delta-f'(0)\beta \lambda (W)}.
 \]
\end{corollary}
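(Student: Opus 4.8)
The plan is to combine the exponential tail bound from Theorem~\ref{thm: exponential decay} with the elementary layer-cake identity $\E[\tau]=\int_0^\infty \PP(\tau>t)\,dt$, valid for the nonnegative random variable $\tau$. First I would note that the all-susceptible state is absorbing for the process $\{\sum_i X_i(t)\}_{t\ge 0}$: once every node is susceptible, each infection rate $\beta\sum_h \mathcal I_{ih}f(0)$ vanishes because $f(0)=0$, so no further infections occur and the count stays at $0$. Consequently $\{\tau>t\}=\{\sum_i X_i(t)>0\}$, and Theorem~\ref{thm: exponential decay} gives
\[
\PP(\tau>t)\le n\,i_0\,e^{-c t},\qquad c:=\delta-\beta f'(0)\lambda(W)>0,
\]
the positivity of $c$ being exactly the hypothesis $\lambda(W)f'(0)\beta<\delta$. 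In particular $\PP(\tau=\infty)=\lim_{t\to\infty}\PP(\tau>t)=0$, so $\tau<\infty$ almost surely.

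Next I would estimate $\E[\tau]=\int_0^\infty\PP(\tau>t)\,dt$ by splitting the range of integration at the time $t^\star:=(\log n)/c$, which is nonnegative since $n\ge 1$. On $[0,t^\star]$ I use the trivial bound $\PP(\tau>t)\le 1$, contributing $t^\star=(\log n)/c$. On $[t^\star,\infty)$ I use $\PP(\tau>t)\le n\,i_0\,e^{-ct}\le n\,e^{-ct}$ (here $i_0\le 1$), whose integral equals $\tfrac{n}{c}e^{-ct^\star}=\tfrac{n}{c}\cdot\tfrac1n=\tfrac1c$. Adding the two contributions yields
\[
\E[\tau]\le \frac{\log n}{c}+\frac1c=\frac{\log n+1}{\delta-f'(0)\beta\lambda(W)},
\]
as claimed.

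There is no genuinely hard step here; the only points needing a line of care are (i) identifying $\tau$ with the first hitting time of the absorbing zero state, so the tail event is precisely $\{\sum_i X_i(t)>0\}$ and Theorem~\ref{thm: exponential decay} applies verbatim, and (ii) checking $t^\star\ge 0$ so that the split is legitimate. One could extract a slightly sharper constant by splitting at $(\log(n i_0))/c$ when $n i_0>1$, but the stated bound, being uniform over $i_0\in[0,1]$, follows from the cruder split above.
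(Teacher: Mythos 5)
Your proof is correct and follows essentially the same route as the paper: the layer-cake identity $\E[\tau]=\int_0^\infty\PP(\tau>t)\,dt$, identification of $\{\tau>t\}$ with $\{\sum_i X_i(t)>0\}$, a split of the integral at $(\log n)/(\delta-f'(0)\beta\lambda(W))$ with the trivial bound before the split and the exponential bound of Theorem~\ref{thm: exponential decay} (with $i_0\le 1$) after it. Your added remarks on the absorbing zero state and the sharper split at $(\log(n i_0))/c$ are fine but not needed.
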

\begin{proof}
Using Theorem~\ref{thm: exponential decay}, 
\begin{align*}
    \E[\tau]&=\int_0^\infty\mathbb P(\tau >t)dt\\
    &=\int_0^\infty\mathbb P(\sum_i X_i(t)>0)dt\\
    &\leq \frac{\log n}{\delta-f'(0)\beta \lambda (W)} + \int_{(\log n)/(\delta - f'(0)\beta \lambda(W))}^\infty n\exp((\beta f'(0)\lambda(W)-\delta)t)dt\\
    &\leq \frac{\log n+1}{\delta - f'(0)\beta \lambda(W)}.
\end{align*}
\end{proof}
 Likewise the partitioned case yields the following result.
\begin{thm}\label{thm: generalized exponential decay}
Suppose every $f_k$ is concave in the partitioned 
hypergraph model with infection rate 
(\ref{eq:partition}).
Then
$$
\mathbb P\left(\sum_i X_i(t)> 0\right)\leq n \, i_0 \, \exp\left(\beta \lambda(\sum_{k=1}^{K}f'_k(0)W^{(k)})-\delta\right).
$$
Hence, if 
$
\lambda(\sum_{k=1}^{K}f'_k(0)W^{(k)})
\beta /\delta < 1
$ then the disease vanishes at an exponential rate.
\end{thm}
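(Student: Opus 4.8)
The plan is to mirror the proof of Theorem~\ref{thm: exponential decay}, replacing the single matrix $f'(0)W$ by $M:=\sum_{k=1}^{K}f_k'(0)W^{(k)}$. First I would record a small sanity check: since each $f_k:\R_+\to\R_+$ satisfies $f_k(0)=0$ and is $C^1$ near $0$, we have $f_k'(0)\ge 0$; hence $M$ is a sum of nonnegative multiples of the symmetric positive semidefinite matrices $W^{(k)}={\mathcal I}^{(k)}{{\mathcal I}^{(k)}}^T$, so $M$ is symmetric with nonnegative entries. This is what makes the construction below legitimate (the rates must be nonnegative) and what makes the relevant matrix exponential symmetric.

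Next, I would introduce the auxiliary continuous time Markov process $\{(Y_i(t))_{t\ge 0}\}_{i=1}^{n}$ on $\N^n$ with, for each $i$, up-transitions $k\to k+1$ at rate $\beta\sum_j M_{ij}Y_j(t)$ and down-transitions $k\to k-1$ at rate $\delta$, coupled so that $Y_i(0)=X_i(0)$ for all $i$. Concavity of each $f_k$ together with $f_k(0)=0$ gives $f_k(x)\le f_k'(0)\,x$ for $x\ge 0$, applied separately in each category $k$, so the infection rate of node $i$ in (\ref{eq:partition}) is bounded above by
\[
\beta\sum_{k=1}^{K}\sum_{h\in E}{\mathcal I}^{(k)}_{ih}f_k'(0)\sum_{j=1}^{n}X_j(t){\mathcal I}^{(k)}_{jh}=\beta\sum_{j=1}^{n}M_{ij}X_j(t).
\]
A standard coupling argument then shows that $Y_i(t)$ stochastically dominates $X_i(t)$ for every $i$ and $t$, whence $\mathbb P(\sum_i X_i(t)>0)\le\mathbb P(\sum_i Y_i(t)>0)\le\sum_i q_i(t)$, with $q_i(t):=\E[Y_i(t)]$.

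The final step is to solve for $q_i$: the Chemical Master Equation for $(Y_i)$ yields the linear system $dQ/dt=(\beta M-\delta I)Q$, with $Q(t)=[q_1(t),\dots,q_n(t)]^T$, so $Q(t)=\exp\!\big(t(\beta M-\delta I)\big)Q(0)$. Since $\beta M-\delta I$ is symmetric, $\exp\!\big(t(\beta M-\delta I)\big)$ has spectral radius $\exp\!\big((\beta\lambda(M)-\delta)t\big)$; hence $\|Q(t)\|\le\exp\!\big((\beta\lambda(M)-\delta)t\big)\|Q(0)\|$ in Euclidean norm, and combining with $\|Q(0)\|=\sqrt n\,i_0$ and the Cauchy--Schwarz bound $\sum_i q_i(t)\le\sqrt n\,\|Q(t)\|$ gives the stated inequality (with the factor $t$ understood in the exponent). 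The exponential-decay conclusion is then immediate when $\lambda(M)\beta/\delta<1$.

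I do not expect a genuine obstacle: the argument is structurally identical to Theorem~\ref{thm: exponential decay}, and the only points needing attention are verifying $f_k'(0)\ge 0$ so that $M$ is a bona fide nonnegative symmetric matrix, and applying the concavity inequality $f_k(x)\le f_k'(0)x$ per category before summing over $k$ and over hyperedges $h$. Accordingly I would write this up very briefly, essentially citing the proof of Theorem~\ref{thm: exponential decay} and indicating the substitution $f'(0)W\mapsto\sum_{k=1}^{K}f_k'(0)W^{(k)}$.
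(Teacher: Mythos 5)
Your proposal is correct and is precisely the argument the paper intends: the paper gives no separate proof, simply stating that the partitioned case follows ``likewise'' from Theorem~\ref{thm: exponential decay} with $f'(0)W$ replaced by $\sum_{k=1}^{K}f_k'(0)W^{(k)}$, which is exactly your substitution (your remark about the missing factor of $t$ in the exponent correctly identifies a typo in the stated bound).
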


We also have the following analogue of Corollary~7.2 on the expected time to extinction for the partitioned case.
\begin{corollary}
Suppose every $f_k$ is concave in the partitioned hypergraph model with infection rate (\ref{eq:partition}).
 Let $\tau$ denote the time of extinction of the disease and suppose  
 $\lambda(\sum_{k=1}^{K}f'_k(0)W^{(k)})
\beta /\delta < 1$, then
 \[
 \E[\tau]\leq\frac{\log n+1}{\delta-\beta \lambda (\sum_{k=1}^{K}f'_k(0)W^{(k)})}.
 \]
\end{corollary}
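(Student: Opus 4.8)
The plan is to imitate the proof of Corollary~\ref{cor:et} verbatim, with Theorem~\ref{thm: generalized exponential decay} playing the role of Theorem~\ref{thm: exponential decay}. Write $\mu := \lambda\big(\sum_{k=1}^{K} f_k'(0) W^{(k)}\big)$, so that the hypothesis reads $\beta\mu/\delta < 1$, and in particular $\delta - \beta\mu > 0$. First I would express the expected extinction time as
\[
\E[\tau] = \int_0^\infty \mathbb P(\tau > t)\,dt = \int_0^\infty \mathbb P\Big(\sum_i X_i(t) > 0\Big)\,dt,
\]
using that $\{\tau > t\}$ is precisely the event that the disease has not yet died out by time $t$.

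Next I would split the integral at $t_0 := (\log n)/(\delta - \beta\mu)$. On $[0,t_0]$ I bound the integrand trivially by $1$, contributing at most $t_0 = (\log n)/(\delta - \beta\mu)$. On $[t_0,\infty)$ I apply Theorem~\ref{thm: generalized exponential decay}, which gives $\mathbb P\big(\sum_i X_i(t) > 0\big) \le n\, i_0\, e^{(\beta\mu - \delta)t}$ (reading the exponent in that theorem with its implicit factor of $t$). Integrating,
\[
\int_{t_0}^\infty n\, i_0\, e^{(\beta\mu - \delta)t}\,dt = \frac{n\, i_0}{\delta - \beta\mu}\, e^{(\beta\mu - \delta)t_0} = \frac{n\, i_0}{\delta - \beta\mu}\cdot\frac1n = \frac{i_0}{\delta - \beta\mu} \le \frac{1}{\delta - \beta\mu},
\]
since the choice of $t_0$ forces $e^{(\beta\mu-\delta)t_0} = e^{-\log n} = 1/n$, and $i_0 \le 1$. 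Adding the two contributions yields $\E[\tau] \le (\log n + 1)/(\delta - \beta\mu)$, which is the claim.

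There is essentially no obstacle: the argument is the same convergent-tail estimate used for Corollary~\ref{cor:et}. The only points meriting a word of care are that the standing hypothesis $\beta\mu/\delta < 1$ is exactly what makes $\delta - \beta\mu > 0$, so that the integral over $[t_0,\infty)$ converges and $t_0$ is well defined and positive, and that $i_0 \in [0,1]$ lets us discard the $i_0$ factor at the end. (If one wished to be scrupulous, the degenerate case $n = 1$, where $\log n = 0$, is covered by the same estimate and needs no separate treatment.)
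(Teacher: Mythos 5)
Your proposal is correct and follows exactly the route the paper takes for Corollary~\ref{cor:et} (split the integral of $\mathbb P(\tau>t)$ at $(\log n)/(\delta-\beta\mu)$, bound the first piece trivially and the tail via the exponential-decay theorem), which is precisely how the paper intends the partitioned analogue to be proved, since it states this corollary without proof as a direct analogue of Corollary~\ref{cor:et} using Theorem~\ref{thm: generalized exponential decay}. Your reading of the missing factor of $t$ in the exponent of that theorem as a typo is also the correct interpretation.
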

\subsection{Conditions that preclude extinction}
\label{subsec:hyp}

So far, we have focused on deriving thresholds 
that imply extinction. 
In this subsection, following 
ideas from \cite{epidemicsSpread}, we 
derive a condition under which 
 the disease will
persist.

Note that our analysis does not require 
the graph
associated with $W$
to be connected. 
The disconnected setting
is relevant, for example, when interventions have been imposed in order to limit interactions.
We let 
$\Delta:=D-W$ denote the Laplacian,
and let $\lambda_c(\Delta)>0$ denote the smallest non-zero eigenvalue of $\Delta$.
We also let $\emax$ denote the size of the largest hyperedge.   

\begin{defn}
Given a hypergraph $\mathcal{H}$, 
a function $f : \R_{+} \to \R_{+}$ and 
a subset of the nodes $S \subset V$, let 
$$
E(S,f):=\sum_{i\in S}\sum_{h\in E}\mathcal I_{i h} f(\sum_{j\in S^c}\mathcal I_{j h}),
$$
where $S^c:= V\backslash S$ is the complement of $S$.
Also define for integer $1\leq m\leq \floor{n/2}$
$$
\eta(\mathcal H,m,f):=\inf\left\{\frac{E(S,f)}{|S|}\ |\ \ 1\leq |S|\leq m\right\},
$$
and let $\eta(\mathcal H,m):=\eta(\mathcal H,m,Id)$.
\end{defn}
Notice that when $S$ consists of those nodes for which 
$X_i(t) = 0$,
we can write the infection transition rate of $\sum_{i=1}^{n}X_i(t)$ as $\beta E(S,f)$.
More generally, 
$\beta E(S,f)$ may be regarded as the rate at which 
nodes in the set $S$ may be infected by nodes in 
the remainder of the network. When $f=Id$ and $m=\floor{n/2}$, $\eta(\mathcal H,m)$ is the Cheeger constant, or isoperimetric number, associated with the weighted graph induced by $W=\mathcal I\mathcal I^T$. 
We may also regard $\eta(\mathcal H,m,f)$
as the smallest 
average infection rate over all subsets consisting of no more than half of the network.

The next theorem gives a probabilistic lower bound on the time to extinction. 
\begin{thm}\label{thm: non-zero steady-state}
Recall that $\tau$ denotes the hitting time of the state $0$ for the process $(\sum_j X_j(t))_{t\geq 0}$
in the hypergraph model (\ref{infection rate random}).
If $f$ is concave and $\lambda_c(\Delta)>\left(2\frac{\emax-1}{f(\emax-1)}\right)\frac{\delta}{\beta}$, then 
$$
\mathbb P\left(\tau >\frac{\floor{r^{-m+1}}}{2m}\right)\geq \frac{1-r}{e}(1+O(r^m)),
$$
where $r:=\frac{(\emax-1)\delta}{ f(\emax-1)\, \beta\,  \eta(\mathcal H,m)}<1$ and $m:=\floor{\frac{n}{2}}$.  
\end{thm}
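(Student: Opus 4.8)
The plan is to imitate the argument of \cite{epidemicsSpread}, reducing everything to a first-passage analysis of the scalar birth--death process $N(t):=\sum_j X_j(t)$ that counts infected nodes. I would first dispose of the hypothesis. Assume $W$ is connected (otherwise $\eta(\mathcal H,m)=0$ and the statement is vacuous), so that $\lambda_c(\Delta)$ is the second-smallest eigenvalue of the Laplacian $\Delta=D-W$. The discrete Cheeger inequality gives $\lambda_c(\Delta)\le 2\,\eta(\mathcal H,\lfloor n/2\rfloor)$ --- obtained by testing the quadratic form of $\Delta$ on the mean-zero vector $\1_S-(|S|/n)\1$, where $S$ attains the infimum defining $\eta$. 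Combined with the assumed inequality $\lambda_c(\Delta)>2\frac{\emax-1}{f(\emax-1)}\frac{\delta}{\beta}$ and $m=\lfloor n/2\rfloor$, this is precisely $\eta(\mathcal H,m)>\frac{\emax-1}{f(\emax-1)}\frac{\delta}{\beta}$, i.e.\ $r<1$, so the quantity $r$ in the statement is well defined.

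The heart of the proof is a one-sided drift estimate for $N$ on the band $1\le N\le m$. When $N(t)=k$ and $S$ is the current susceptible set, so $|S^c|=k$, the process $N$ increases by one at rate $\beta E(S,f)$ and decreases by one at rate $\delta k$. Concavity of $f$ with $f(0)=0$ makes $t\mapsto f(t)/t$ nonincreasing, so $f(t)\ge c_f\,t$ on $[0,\emax-1]$ with $c_f:=f(\emax-1)/(\emax-1)$; moreover $\sum_{j\in S^c}\mathcal I_{jh}\le\emax-1$ whenever $\mathcal I_{ih}=1$ with $i\in S$. Hence
\[
E(S,f)\ \ge\ c_f\sum_{i\in S}\sum_{h}\mathcal I_{ih}\sum_{j\in S^c}\mathcal I_{jh}\ =\ c_f\sum_{i\in S,\,j\in S^c}W_{ij}\ =\ c_f\,E(S^c,Id)\ \ge\ c_f\,\eta(\mathcal H,m)\,k,
\]
the last step using $1\le|S^c|=k\le m$ in the definition of $\eta(\mathcal H,m)$. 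So on $\{1,\dots,m\}$ the ratio (down-rate)$/$(up-rate) never exceeds $\delta/(\beta c_f\,\eta(\mathcal H,m))=r<1$: while it stays below $m$, the chain $N$ is biased away from extinction.

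Given this, I would compare the embedded jump chain of $N$ on $\{1,\dots,m\}$ with a biased random walk that steps up with probability $p:=1/(1+r)$ and down with probability $q:=r/(1+r)$, so $q/p=r$. Two gambler's-ruin facts follow. From any $a\ge1$, $N$ hits level $m$ before $0$ with probability at least $(1-r^{a})/(1-r^{m})\ge(1-r)(1+O(r^{m}))$. And once $N$ is at $m$, each excursion of $N$ into $\{1,\dots,m-1\}$ returns to $m$ without touching $0$ with probability at least $1-\tfrac{r^{m-1}(1-r)}{1-r^{m}}$; by the strong Markov property at consecutive visits to $m$, these excursions are i.i.d. To turn step counts into a time bound, observe that extinction before time $T$ forces one of the first $K_T$ excursions to succeed, where $K_T$ is the number of excursions completed by time $T$; each (unsuccessful) excursion contains at least one recovery event, recovery events accumulate at rate $\delta N(t)\le\delta n$, so the choice $T=\lfloor r^{-m+1}\rfloor/(2m)$ forces $K_T$ to be of order $r^{-(m-1)}$, whereupon the product of the per-excursion survival probabilities over the first $K_T$ excursions stays bounded below, supplying the factor $\tfrac1e\,(1+o(1))$. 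Multiplying by the probability of first reaching $m$, and absorbing the chance $1-(1-i_0)^{n}$ that $N(0)\ge1$ into the error term, gives $\mathbb P(\tau>T)\ge\frac{1-r}{e}(1+O(r^{m}))$.

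The routine steps are the reduction to $r<1$, the concavity/isoperimetry drift bound, and the gambler's-ruin first-passage probabilities. The main obstacle is the final step --- converting the combinatorial first-passage estimates into a lower bound on the continuous-time extinction time. One must control the excursions of $N$ into $\{m+1,\dots,n\}$, where the random-walk comparison breaks down; such excursions only postpone extinction, but they must be incorporated carefully into the renewal decomposition rather than ignored, and the constants must be tracked so the threshold emerges exactly as $\lfloor r^{-m+1}\rfloor/(2m)$ with prefactor $\frac{1-r}{e}(1+O(r^{m}))$. This is where one follows the bookkeeping of \cite{epidemicsSpread} most closely.
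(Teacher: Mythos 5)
Your proposal follows essentially the same route as the paper: concavity gives $f(x)\ge \frac{f(\emax-1)}{\emax-1}x$, which together with the isoperimetric constant $\eta(\mathcal H,m)$ bounds the infection rate below by $\beta\frac{f(\emax-1)}{\emax-1}\eta(\mathcal H,m)\,k$ while at most $m$ nodes are infected, so the infected count dominates a linear birth--death chain with rate ratio $r$, and the survival-time bound then comes from the gambler's-ruin/excursion analysis that is exactly Theorem~4.1 of \cite{epidemicsSpread} (which the paper simply cites, whereas you sketch its proof); the Cheeger inequality step ensuring $r<1$ is also the same. This is correct and matches the paper's argument, with your explicit symmetry step $E(S,f)\ge c_f\sum_{i\in S,j\in S^c}W_{ij}=c_f E(S^c,Id)\ge c_f\eta(\mathcal H,m)k$ being a slightly more careful rendering of the paper's domination inequality.
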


From standard Cheeger inequalities \cite{CheegerInequalitiesSharp}, we know that the Cheeger constant of the graph induced by $W$ satisfies $2 \eta(\mathcal H, m)\geq \lambda_c(\Delta)$, hence using the assumptions on $\lambda_c(\Delta)$ in the theorem, we see that $r<1$ indeed.


In order to prove this result, we introduce the following lemma.

\begin{lemma}
If $f$ is concave and non-decreasing, then
$$\frac{f(\emax-1)}{\emax-1}\eta(\mathcal H,m)\leq \eta(\mathcal H,m,f).$$
\end{lemma}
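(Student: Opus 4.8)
For $f$ concave, non-decreasing (with $f(0)=0$), we want
$$
\frac{f(\emax-1)}{\emax-1}\,\eta(\mathcal H,m)\le \eta(\mathcal H,m,f).
$$

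The plan is to work subset-by-subset. Fix any $S\subset V$ with $1\le |S|\le m$; it suffices to show that
$$
\frac{f(\emax-1)}{\emax-1}\,\eta(\mathcal H,m)\le \frac{E(S,f)}{|S|},
$$
and then take the infimum over such $S$. The left-hand side only involves $\eta(\mathcal H,m)=\eta(\mathcal H,m,\mathrm{Id})$, which by definition is $\le E(S,\mathrm{Id})/|S|$, so it is enough to prove the per-term comparison
$$
\frac{f(\emax-1)}{\emax-1}\sum_{i\in S}\sum_{h\in E}\mathcal I_{ih}\Big(\sum_{j\in S^c}\mathcal I_{jh}\Big)\ \le\ \sum_{i\in S}\sum_{h\in E}\mathcal I_{ih}\,f\Big(\sum_{j\in S^c}\mathcal I_{jh}\Big).
$$
Thus the whole thing reduces to a pointwise inequality: for each integer $t:=\sum_{j\in S^c}\mathcal I_{jh}$ occurring here, we need $\tfrac{f(\emax-1)}{\emax-1}\,t\le f(t)$.

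The key elementary fact is the following consequence of concavity with $f(0)=0$: the map $t\mapsto f(t)/t$ is non-increasing on $(0,\infty)$. Indeed, for $0<t\le u$, writing $t=(t/u)\,u+(1-t/u)\cdot 0$ and using concavity gives $f(t)\ge (t/u)f(u)+(1-t/u)f(0)=(t/u)f(u)$, i.e.\ $f(t)/t\ge f(u)/u$. Now the only edges $h$ contributing a nonzero term have $\mathcal I_{ih}=1$ for some $i\in S$, so $h$ contains at least one node of $S$; since $|h|\le \emax$, the number of nodes of $S^c$ in $h$ satisfies $t=\sum_{j\in S^c}\mathcal I_{jh}\le |h|-1\le \emax-1$. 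Applying the monotonicity of $f(\cdot)/(\cdot)$ with $u=\emax-1\ge t$ (assuming $t\ge 1$; the $t=0$ case is trivial since $f(0)=0$) yields $f(t)\ge \tfrac{f(\emax-1)}{\emax-1}\,t$, which is exactly the needed pointwise bound. Summing over $i\in S$ and $h\in E$ gives the term-by-term inequality above, hence the subsetwise bound, and taking the infimum over $S$ completes the proof.

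There is no real obstacle here; the only point requiring a moment's care is the bookkeeping that ensures $t\le \emax-1$ rather than merely $t\le\emax$ — this uses that each contributing hyperedge contains at least one node of $S$, so at most $\emax-1$ of its nodes lie in $S^c$ — together with handling the degenerate $t=0$ terms (which vanish) so that one may legitimately invoke the $f(t)/t$ monotonicity only for $t\ge 1$. One should also note where non-decreasingness is used: it is not strictly needed for this chain of inequalities (concavity plus $f(0)=0$ already forces $f\ge 0$ on $[0,\emax-1]$ and the $f(t)/t$ monotonicity), but it is harmless to assume and matches the hypotheses under which the surrounding results (e.g.\ Theorem~\ref{thm: non-zero steady-state}) are invoked.
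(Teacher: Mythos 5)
Your proof is correct and takes essentially the same route as the paper's, whose entire argument is the pointwise inequality $\frac{f(\emax-1)}{\emax-1}x\leq f(x)$ for $x\in\{0,\dots,\emax-1\}$ (a consequence of concavity with $f(0)=0$), applied term by term inside $E(S,f)$. You simply spell out the bookkeeping the paper calls ``immediate'' --- that each contributing hyperedge contains a node of $S$, so its argument is at most $\emax-1$, and the subsequent infimum over $S$.
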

\begin{proof}
The proof is immediate once we see that by concavity of $f$, for all $x\in \{0,\dots,\emax-1\}$
$$
\frac{f(\emax-1)}{\emax-1}x\leq f(x).
$$
\end{proof}
Now, to prove Theorem~\ref{thm: non-zero steady-state} consider the Markov process $(Z(t))_{t\geq 0}$ valued in $\{0,\dots,m\}$, with transition of states given by
$$
\begin{cases}
k\ \rightarrow\ k+1\text{ with transition rate }k\beta\frac{f(\emax-1)}{\emax-1}\eta(\mathcal H, m),\\
k\ \rightarrow\ k-1\text{ with transition rate }k\delta.
\end{cases}
$$
This Markov process is stochastically dominated by $((\sum_{i=1}^{n} X_i(t))_{t\geq 0})$, which has 
the same downward transition rate, and an upward transition rate that is at least as large:
\begin{align*}
    k\beta \frac{f(\emax-1)}{\emax-1}\eta(\mathcal H,m)&\leq k\beta\eta(\mathcal H,m,f)\\
    &\leq \beta E(S,f),
\end{align*}
where $S:=\{i\in \{1,2,\ldots,n\}\ |\ X_i(t)=0\}$. Thus, to show Theorem~\ref{thm: non-zero steady-state} it suffices to find a suitable lower bound for 
$\mathbb{P}\left(\widehat{\tau} >\frac{\floor{r^{-m+1}}}{2m}\right)$, where 
$\widehat{\tau} $ is the hitting time of $0$ for the process $(Z(t))_{t\geq 0}$. This follows by applying Theorem $4.1$ of \cite{epidemicsSpread} to the process $(Z(t))_{t\geq 0}$. 

\section{Collective Contagion Models}
\label{sec:cim}
We now consider the collective contagion models from 
\cite{APM20,simplicialSocialContagion,heterogeneityHypergraph}, where infection only starts spreading within a  hyperedge once a threshold number of infectious nodes in that hyperedge has been reached. 
As discussed in subsection~\ref{subsec:model}, collective contagion models can be represented by nonlinear functions of the form $f(x):=\max\{0,x-c\}$ for some $c>0$, or $f(x):=c_2\1(x\geq c_1)$ for some $c_1,c_2>0$.
In these cases it is obvious that the zero-infection state for the mean field approximation is locally asymptotically stable (and, indeed,
Theorem~\ref{thm: local asymptotic stability} applies).
However, 
because the functions 
are not concave, the theory found in 
Section~\ref{sec:exact_model}
for the exact model does not directly apply.
Nonetheless, we can still derive similar spectral conditions for the vanishing of the disease by finding concave functions which serve as upper bounds for $f$. For instance using $c_2\1(x\geq c_1)\leq \frac{c_2}{c_1}x\1(x\leq c_1)+c_2\1(x\geq c_1)$,
the bounds in 
Theorem~\ref{thm: exponential decay}
and Corollary~\ref{cor:et}
lead to the 
the following result.

\begin{thm}\label{thm: collective contagion model}
Suppose that $f(x):=c_2\1(x\geq c_1)$ for some $c_1,c_2>0$.
Then
$$
\mathbb P\left(\sum_i X_i(t)> 0\right)\leq n\, i_0\, \exp\left(\beta \frac{c_2}{c_1}\lambda(W)-\delta\right).
$$
In particular if $\lambda(W)<\frac{c_1}{c_2}\frac{\delta}{\beta}$, then the disease asymptotically vanishes with exponential decay and the extinction time $\tau$ satisfies
$$
\E[\tau]\leq\frac{\log n +1}{\delta -
\beta \lambda(W) c_2 / c_1}.
$$
\end{thm}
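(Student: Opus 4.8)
The plan is to dominate the non-concave step function by a concave majorant and then quote the concave results already established. The function $f(x)=c_2\1(x\ge c_1)$ is not concave, but it satisfies the linear bound
\[
c_2\1(y\ge c_1)\le \frac{c_2}{c_1}\,y\qquad\text{for all }y\ge 0,
\]
which is trivial for $y<c_1$ and follows for $y\ge c_1$ because the right-hand side is then at least $c_2$. Consequently $f$ is dominated on $\R_+$ by the concave function $\tilde f(x):=\min\{\tfrac{c_2}{c_1}x,\,c_2\}=\tfrac{c_2}{c_1}\min\{x,c_1\}$, which obeys the standing assumptions $\tilde f(0)=0$ and $\tilde f\in C^1$ near $0$, and has $\tilde f'(0)=c_2/c_1$.

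First I would make the domination of the processes precise. Let $(\tilde X_i(t))_{i=1}^n$ be the hypergraph SIS process (\ref{infection rate random}) driven by $\tilde f$ in place of $f$, with the same hypergraph, the same constants $\beta,\delta$, and the same initial law. Since for every node the infection rate $\beta\sum_h\mathcal I_{ih}\,g(\sum_j X_j\mathcal I_{jh})$ is nondecreasing in the configuration whenever $g$ is nondecreasing, and since $f\le\tilde f$ pointwise with both functions nondecreasing, the basic monotone coupling for attractive SIS-type dynamics (a graphical representation with shared recovery clocks and nested infection clocks) yields a joint realisation with $X_i(t)\le\tilde X_i(t)$ for all $i$ and $t$, almost surely. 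In particular $\{\sum_i X_i(t)>0\}\subseteq\{\sum_i\tilde X_i(t)>0\}$ and the extinction time $\tau$ of $(\sum_i X_i(t))$ satisfies $\tau\le\tilde\tau$, where $\tilde\tau$ is the extinction time of $(\sum_i\tilde X_i(t))$.

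The theorem then follows by applying the concave results to $\tilde f$. Theorem~\ref{thm: exponential decay} gives
\[
\mathbb P\!\left(\sum_i\tilde X_i(t)>0\right)\le n\,i_0\,\exp\!\big((\beta\tilde f'(0)\lambda(W)-\delta)t\big)=n\,i_0\,\exp\!\big((\beta\tfrac{c_2}{c_1}\lambda(W)-\delta)t\big),
\]
so the event inclusion yields the stated bound on $\mathbb P(\sum_i X_i(t)>0)$; the hypothesis $\lambda(W)<\tfrac{c_1}{c_2}\tfrac{\delta}{\beta}$ makes the exponent strictly negative, giving exponential extinction. For the mean hitting time I would either invoke Corollary~\ref{cor:et} for $\tilde f$ together with $\mathbb E[\tau]\le\mathbb E[\tilde\tau]\le(\log n+1)/(\delta-\beta\lambda(W)c_2/c_1)$, or, more directly and bypassing the coupling altogether, integrate the probability bound via $\mathbb E[\tau]=\int_0^\infty\mathbb P(\sum_i X_i(t)>0)\,dt$ exactly as in the proof of Corollary~\ref{cor:et}.

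The one step that needs care is the monotone coupling: because the infection rate of a node depends on the whole current configuration, the passage from the pointwise inequality $f\le\tilde f$ to stochastic domination of the configuration-valued processes is a graphical-representation argument rather than a one-line appeal (it works precisely because $f$ and $\tilde f$ are nondecreasing). A shortcut that avoids this is to re-run the proof of Theorem~\ref{thm: exponential decay} verbatim: its only use of concavity is the bound $f(y)\le f'(0)y$, and $c_2\1(y\ge c_1)\le(c_2/c_1)y$ plays exactly the same role, so the auxiliary $\N^n$-valued severity process with upward rate $\beta\tfrac{c_2}{c_1}\sum_j W_{ij}Y_j(t)$ still stochastically dominates $(X_i(t))$ and the subsequent linear-ODE estimate is unchanged, with the extinction-time bound then following as in Corollary~\ref{cor:et}.
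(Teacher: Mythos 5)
Your proposal is correct and follows essentially the same route as the paper: the paper also dominates $c_2\1(x\geq c_1)$ by the concave majorant $\tfrac{c_2}{c_1}x\1(x\leq c_1)+c_2\1(x\geq c_1)=\tfrac{c_2}{c_1}\min\{x,c_1\}$ and then invokes the bounds of Theorem~\ref{thm: exponential decay} and Corollary~\ref{cor:et} with $f'(0)$ replaced by $c_2/c_1$. Your ``shortcut'' (re-running the proof of Theorem~\ref{thm: exponential decay} using only $c_2\1(y\geq c_1)\leq (c_2/c_1)y$, so the dominating severity process is built directly and no coupling between the two hypergraph SIS processes is needed) is exactly the cleanest reading of what the paper intends.
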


Likewise note that
$
\max\{0,x-c\}\leq \frac{\emax-1-c}{\emax-1}x,
$
where we recall that $\emax$ is the largest size of a hyperedge of $\mathcal H$. 
Hence we deduce the following result.
\begin{thm}\label{thm: max collective contagion model}
Suppose that $f(x):=\max\{0,x-c\}$ for some $c>0$. Then
$$
\mathbb P\left(\sum_i X_i(t)> 0\right)\leq n\, i_0\, \exp\left(\beta \frac{\emax-1-c}{\emax-1}\lambda(W)-\delta\right).
$$
In particular if 
\[
\lambda(W)< \left( \frac{\emax-1}{\emax-1-c} \right) \frac{\delta}{\beta},
\]then the disease asymptotically vanishes with exponential decay and the extinction time $\tau$ satisfies
\[
 \E[\tau]\leq\frac{\log n+1}{\delta-\beta \frac{\emax-1-c}{\emax-1} \lambda (W)}.
 \]
\end{thm}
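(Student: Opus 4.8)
The plan is to dominate the collective‑contagion nonlinearity $f(x)=\max\{0,x-c\}$ by a linear (hence concave) function and then transcribe the concave theory already established in Theorem~\ref{thm: exponential decay} and Corollary~\ref{cor:et}, exactly as was done for Theorem~\ref{thm: collective contagion model}. We may assume $0<c<\emax-1$: if $c\ge\emax-1$ then no hyperedge can ever accumulate enough infected nodes to trigger an infection, so the process is pure death and the bounds hold trivially. The elementary fact driving everything is that, for every integer $x$ with $0\le x\le\emax-1$,
\[
\max\{0,x-c\}\ \le\ \frac{\emax-1-c}{\emax-1}\,x ,
\]
since the right‑hand side is the line through the origin agreeing with $\max\{0,x-c\}$ at $x=\emax-1$: for $x\le c$ the left side is $0$ and for $c<x\le\emax-1$ the inequality rearranges to $x\le\emax-1$. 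Set $\widetilde f(x):=\tfrac{\emax-1-c}{\emax-1}\,x$, which is concave, non‑decreasing, with $\widetilde f(0)=0$ and $\widetilde f'(0)=\tfrac{\emax-1-c}{\emax-1}$.

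Next I would reproduce the coupling from the proof of Theorem~\ref{thm: exponential decay}, with $f'(0)$ replaced by $\widetilde f'(0)$. Introduce the auxiliary $\N^n$‑valued process $(Y_i(t))_{t\ge0}$ with upward rate $\beta\widetilde f'(0)\sum_jW_{ij}Y_j(t)$ and downward rate $\delta$, started from $Y_i(0)=X_i(0)$. Whenever node $i$ is susceptible — the only situation in which its infection transition in (\ref{infection rate random}) is active — the quantity $\sum_jX_j(t)\mathcal I_{jh}$ entering its rate counts infected nodes other than $i$ inside a hyperedge $h\ni i$, hence is an integer in $\{0,\dots,\emax-1\}$; by the displayed inequality that rate is at most $\beta\sum_h\mathcal I_{ih}\widetilde f\big(\sum_jX_j(t)\mathcal I_{jh}\big)=\beta\widetilde f'(0)\sum_jW_{ij}X_j(t)\le\beta\widetilde f'(0)\sum_jW_{ij}Y_j(t)$. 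A monotone coupling then gives that $\big(\sum_iX_i(t)\big)_{t\ge0}$ is stochastically dominated by $\big(\sum_iY_i(t)\big)_{t\ge0}$, just as in Theorem~\ref{thm: exponential decay}.

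From there the rest of the proof of Theorem~\ref{thm: exponential decay} applies verbatim: $Q(t):=(\E[Y_i(t)])_i$ solves $\dot Q=(\beta\widetilde f'(0)W-\delta I)Q$, the symmetric matrix $\exp(t(\beta\widetilde f'(0)W-\delta I))$ has spectral radius $\exp((\beta\widetilde f'(0)\lambda(W)-\delta)t)$, and $\|Q(0)\|=\sqrt n\,i_0$, so by Cauchy--Schwarz
\[
\mathbb P\Big(\sum_iX_i(t)>0\Big)\le\sum_i\E[Y_i(t)]\le\sqrt n\,\|Q(t)\|\le n\,i_0\,\exp\big((\beta\tfrac{\emax-1-c}{\emax-1}\lambda(W)-\delta)t\big),
\]
which is the asserted probability bound. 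For the ``in particular'' part, the hypothesis $\lambda(W)<\big(\tfrac{\emax-1}{\emax-1-c}\big)\tfrac{\delta}{\beta}$ is precisely $\beta\widetilde f'(0)\lambda(W)-\delta<0$, giving exponential decay; repeating the two‑interval estimate of Corollary~\ref{cor:et} — split $\E[\tau]=\int_0^\infty\mathbb P(\tau>t)\,dt$ at $t_0=(\log n)/(\delta-\beta\widetilde f'(0)\lambda(W))$, bound the integrand by $1$ on $[0,t_0]$ and by $n\exp((\beta\widetilde f'(0)\lambda(W)-\delta)t)$ on $[t_0,\infty)$ — then yields $\E[\tau]\le(\log n+1)/(\delta-\beta\tfrac{\emax-1-c}{\emax-1}\lambda(W))$.

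I do not expect a genuine obstacle: the argument is a transcription of the concave case. The one point needing care is that the linear bound $\max\{0,x-c\}\le\tfrac{\emax-1-c}{\emax-1}x$ \emph{fails} at $x=\emax$ (indeed there is no globally dominating concave function with slope below $1$ at the origin), so it is essential to observe that the argument of $f$ arising in an active infection transition never exceeds $\emax-1$ because the node being infected is itself susceptible and hence not counted — which is exactly what makes the coupling legitimate and the improved constant $\tfrac{\emax-1-c}{\emax-1}<1$ available.
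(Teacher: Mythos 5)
Your proposal is correct and follows essentially the same route as the paper: dominate $\max\{0,x-c\}$ by the linear function $\frac{\emax-1-c}{\emax-1}x$ and then invoke the coupling and integral estimates of Theorem~\ref{thm: exponential decay} and Corollary~\ref{cor:et}. Your explicit remark that the argument of $f$ at an active infection transition never exceeds $\emax-1$ (because the node being infected is susceptible), which is what makes the linear bound legitimate, is a point the paper leaves implicit.
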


\section{Discussion}
\label{sec:disc}

In this work we derived several 
 spectral conditions that control the 
 spread of disease in an SIS model on a hypergraph. 
 The conditions have the general form 
 \begin{equation}\label{new vanishing condition}
\beta \, \lambda(W) \, c_f/ \delta < 1,
\end{equation}
where $c_f>0$ is a constant depending on the
function $f$
that determines the nonlinear infection rate within a hyperedge.

We note that in the special case where 
(i) the hypergraph
is an undirected graph and hence $W$ becomes the binary adjacency matrix,
and
(ii) we have linear dependence on the number of infectious neighbors for the infection rate of a node, so $f$ is the identity function,
the condition 
(\ref{new vanishing condition})
reduces to the 
well-known vanishing spectral condition
studied in, for example, 
(\cite{epidemicsSpread,virusSpreadInNetworks,spreadingEigenvalue}).

There are two important points to be made about the 
general form of 
(\ref{new vanishing condition}).
First, the hypergraph structure appears only via the presence of the symmetric matrix 
$W \in \R^{n \times n}$.
Recall that 
$W_{ij}$ records 
the number of times that 
$i$ and $j$ both appear in the same 
hyperedge. 
Such weighted but \emph{pairwise}
information is all that feeds into this 
spectral 
threshold.
On a positive note, this implies that useful 
predictions can be made about 
disease spread on a hypergraph without 
full knowledge of the types of hyperedge 
present and the distribution of nodes within them.
(For example, when collecting human interaction data it is more reasonable to ask an individual to list each 
neighbour and state how many different ways they 
interact with that neighbour than to ask 
an individual to list all hyperedges they take part in.)
However, 
this observation also raises the possibility that 
more refined analysis might lead to 
sharper 
bounds, perhaps at the expense of simplicity and interpretability.

Our second point is that 
the new vanishing condition 
(\ref{new vanishing condition})
neatly separates three aspects:
\begin{description}
\item[(a)]
The biologically-motivated infection parameter, $\beta$.
\item[(b)]
The interaction 
structure, captured in $\lambda(W)$.
\item[(c)]
The 
coefficient $c_f$ that arises from  
modelling the nonlinear infection process.
For instance, Theorem~\ref{thm: local asymptotic stability} and Theorem~\ref{thm: global asymptotic condition} have $c_f=f'(0)$. In the collective contagion model case $f(x)=c_2\1(x\geq c_1)$, Theorem~\ref{thm: collective contagion model} indicates that we can take $c_f=c_2/c_1$.
\end{description}
We may view $\beta$ as an invariant biological constant
that reflects the underlying virulence of the disease
and is not affected by human behaviour.
The factor $\lambda(W)$, which arises from the interaction structure, will be determined by 
regional and cultural issues, including  
population density, 
age demographics,
typical household sizes, and the 
nature of prevalent commercial and manufacturing activities.
Interventions, including full or partial lockdowns, 
could be modeled through a change in $\lambda(W)$.
The third factor, $c_f$, 
is strongly dependent upon 
human behaviour and may be adjusted to reflect 
individual-based containment strategies such as social distancing, mask wearing or more frequent hand washing.
 


This work has focused on modelling, analysis and interpretation at the abstract level, concentrating on the 
fundamental question of disease extinction. 
Having developed this theory, 
it would, of course, now be of great interest to
perform  practical experiments using realistic interaction and infection data, with the aim of 
\begin{itemize}
    \item calibrating model parameters,
    \item testing hypotheses about the appropriate functional form of the infection rate,
    \item testing the predictive power of the modeling framework, especially in comparison with 
    simpler homogeneous mixing and pairwise interaction versions, 
    \item quantifying the effect of different interventions.
\end{itemize}


\bibliographystyle{siam}
\bibliography{refs}
\end{document}